\documentclass[12pt,reqno,fleqn,a4paper]{amsart}
\usepackage{mathrsfs}
\usepackage{bbm}
\usepackage{amsmath,amssymb}
\usepackage{color}

\usepackage[titletoc]{appendix}

\parskip 2pt
\allowdisplaybreaks \numberwithin{equation}{section}
\setlength\arraycolsep{8pt}

\numberwithin{equation}{section} \setlength\arraycolsep{2pt}

 \newtheorem{thm}{Theorem}[section]
 
 \newtheorem{lem}[thm]{Lemma}
 \newtheorem{prp}[thm]{Proposition}

\newenvironment{prf}{\noindent {\it Proof} \ }{\hfill $\Box$}

\newcommand{\eqa}{\begin{eqnarray}}
\newcommand{\eeqa}{\end{eqnarray}}

\newcommand{\beq}{\begin{equation}}
\newcommand{\eeq}{\end{equation}}

\renewcommand{\[}{\begin{eqnarray}}
\renewcommand{\]}{\end{eqnarray}}
\newcommand\pd{\partial}

\addtolength{\hoffset}{-0.06\textwidth}
\setlength{\textwidth}{1.12\textwidth}
\addtolength{\voffset}{-0.05\textheight}
\setlength{\textheight}{1.1\textheight}

\newcommand{\nn}{\nonumber}

\newcommand{\Gm}{\Gamma}

 \renewcommand{\d}{\partial}

\newcommand{\res}{\mathrm{res}}

\newcommand\Q{\mathbb{Q}}

\newcommand\Z{\mathbb{Z}}

\newcommand\cA{\mathcal{A}}
\newcommand\B{\mathcal{B}}

\newcommand\cL{\mathcal{L}}
\newcommand\cM{\mathcal{M}}

\begin{document}
\title{Supersymmetric BKP systems and their symmetries}
\author{
Chuanzhong Li\dag,\  \ Jingsong He\ddag}
\allowdisplaybreaks
\dedicatory {\small Department of Mathematics and Ningbo Collabrative Innovation Center of Nonlinear Harzard System of Ocean and Atmosphere,\\
 Ningbo university, Ningbo 315211, China,\\
\dag Email:lichuanzhong@nbu.edu.cn\\
\ddag Email:hejingsong@nbu.edu.cn}
\thanks{\ddag Corresponding author}

\date{}

\begin{abstract}
In this paper, we construct the additional symmetries of the supersymmetric BKP(SBKP) hierarchy.
These additional  flows constitute a B type $SW_{1+\infty}$ Lie algebra because of the B type reduction of the supersymmetric BKP  hierarchy. Further we generalize the  SBKP hierarchy to a supersymmetric two-component BKP (S2BKP) hierarchy
equipped with a B type $SW_{1+\infty}\bigoplus SW_{1+\infty}$ Lie algebra. As a Bosonic reduction of the  S2BKP hierarchy, we define a new constrained system called the supersymmetric Drinfeld-Sokolov hierarchy of type D which admits a $N=2$ supersymmetric Block type symmetry.
\end{abstract}

\allowdisplaybreaks
\maketitle{\allowdisplaybreaks}
\vskip 2ex
\noindent Mathematics Subject Classifications (2000).  37K05, 37K10, 37K20, 17B65, 17B67.\\
\noindent{\bf Key words}: Additional symmetry,
supersymmetric BKP hierarchy, supersymmetric two-component BKP hierarchy, supersymmetric Drinfeld-Sokolov hierarchy of type D, $SW_{1+\infty}$ algebra, supersymmetric Block Lie algebra.

\tableofcontents
\allowdisplaybreaks
 \setcounter{section}{0}

\section{Introduction}

In the study of  integrable hierarchies, it  is interesting
to find their symmetries and identify
the algebraic structure of the symmetries.  Among
these symmetries, the additional symmetry is an important type
 which contains
dynamic variables explicitly and these additional flows do not commutes with each other.
Additional symmetries of the Kadomtsev-Petviashvili(KP) hierarchy
were introduced  by Orlov and Shulman \cite{os1} which contain one important symmetry, i.e. the so-called Virasoro symmetry.  These
symmetries form a centerless $W_{1+\infty}$ algebra closely
related to matrix models by means of the Virasoro constraint and
string equations\cite{D witten,Douglas, dl1,asv2}. Two
sub-hierarchies as the BKP hierarchy and CKP
hierarchy\cite{DKJM-KPBKP,DJKM,takasaki,tu,heLMP,jipeng}  have been shown to possess
additional symmetries,
 with consideration of
the reductions on the Lax operators.

Various generalizations and supersymmetric extensions \cite{yamada} of the KP hierarchy have
deep implications in mathematical physics, particularly in the theory of Lie algebras. In \cite{kacBF,kacBFB}, the  theory of the super Lie algebras was surveyed by considering super Boson-Fermion Correspondences. One important supersymmetric extension is the supersymmetric Manin-Radul Kadomtsev-Petviashvili
(MR-SKP) hierarchy\cite{maninsuperKP} which contains a lot of integrable super solitary equations equipped with the super-pseudodifferential
operators. Apart from the Manin-Radul one, Mulase supersymmetrize
the KP hierarchy  by constructing a hierarchy  called the
Jacobian SKP hierarchy
which does not possess a standard Lax formulation \cite{Mulase}. This hierarchy has strict Jacobian
flows, i.e. it preserves the super Riemman surface about which one can also see \cite{rosly}.
The additional symmetries for super hierarchies  were firstly found in the paper \cite{orlovsuper} by constructing the standard Orlov-Schulman additional
nonisospectral  flows. Later the additional symmetry of the MR-SKP hierarchy was studied by Stanciu \cite{Stanciu}. The
 ghost symmetries, hamiltonian structures and extensions of the MR-SKP hierarchy were studied as well as reductions of the MR-SKP hierarchy\cite{aratynsuper,superKPhe}.  Later the supersymmetric BKP (SBKP) hierarchy  was constructed in \cite{StanciuBKP}. After that this series of super hierarchies  were seldom studied in mathematical physics partly because of their extreme complexities.

 For the symmetry of the two-component BKP hierarchy, there is a series of works such as \cite{DJKM,shiota,LWZ,bkpds}.
 In the paper \cite{BKP-DS}, we construct the generalized additional symmetries of the two-component BKP hierarchy
 and identify its algebraic structure.
 Besides, the D type Drinfeld-Sokolov hierarchy was found to be a good differential model to derive a complete
 Block type infinite dimensional Lie algebra. About the Block algebra related to integrable systems, we did a series of works in \cite{ourBlock}-\cite{torus}.
 In this paper, we will construct the additional symmetries of the supersymmetric BKP hierarchy.
These additional flows constitute a B type $SW_{1+\infty}$ Lie algebra. Further we generalize the SBKP hierarchy to a supersymmetric two-component BKP hierarchy (S2BKP) hierarchy
and derive its algebraic structure. As a reduction of the S2BKP hierarchy, in this paper a new supersymmetric Drinfeld-Sokolov hierarchy of type D will be constructed and proved to have a super Block type additional symmetry.

In (1+1) dimensional sypersymmetric integrable systems, starting from 1980s, there is also a series of work about superversions of Korteweg-de Vries(KdV), Toda-KdV equations and so on\cite{kulishkdv}-\cite{toda-kdv}. Most of them are related to the conformal field theory and string theory\cite{string,KDV}. The generalized KdV equations
and Toda lattice equations are particularly interesting integrable nonlinear systems in connection with
conformal field theories. Their Virasoro symmetry can be extended to a $W_n$ algebra
which is known to
arise from the Hamiltonian structure of the generalized KdV equation \cite{DS} by incorporating conserved currents of higher spins.
The supersymmetric version of the Drinfeld-Sokolov reduction  of the Toda-KdV theories gives the generators of the related super $W$-algebra
with the commutation relations provided by the associated Hamiltonian structure \cite{kannosuperDS}.
In this paper we will extend the Lie algebraic method of Drinfeld and Sokolov \cite{DS} to
the sypersymmetric case and develop a Lie superalgebraic method for a supersymmetric D type Drinfeld-Sokolov hierarchy. We further derive that the supersymmetric Drinfeld-Sokolov hierarchy of type D  possesses a $N=2$ supersymmetric Block type Lie algebraic structure.

This paper is arranged as follows. In the next section we recall some
necessary facts of the SBKP hierarchy. In Sections 3,
we will give the additional symmetries for the
SBKP hierarchy. The ghost symmetry  of the SBKP hierarchy will be devoted to Section 4 and 5 using the techniques in \cite{aratynsuper}.  Further in Section 6 and 7, we generalize the SBKP hierarchy to a S2BKP  hierarchy
and derive its B type $SW_{1+\infty}\bigoplus SW_{1+\infty}$  algebra.
 As a Bosonic reduction of the  supersymmetric two-component BKP hierarchy, we define a new constrained system called the supersymmetric Drinfeld-Sokolov hierarchy of type D which possesses a $N=2$ supersymmetric Block type Lie algebra in the following two sections.
 Finally, we will give a short conclusion and a further discussion.

\section{The supersymmetric BKP hierarchy}

Let us firstly recall some basic facts\cite{StanciuBKP} on the supersymmetric  BKP  system which
is well defined by two Lax operators.

 $\cA$ is assumed as an algebra of smooth functions of a spatial
coordinate $x$, a grassmann variable $\theta$ and their super-derivation denoted as  $ D=\d_{\theta}+\theta\d$. This algebra  $\cA$ has the following multiplying rule

\begin{align}\label{Apm}
D^n\Phi &= \sum_{i=0}^\infty {n\brack {n-i}} (-1)^{|\Phi|(n-i)} \Phi^{[i]}
D^{n-i},\end{align}
\eqa{n\brack{n-i}} = \begin{cases}
0& i<0\ \  or\ \  (n,i)= (0,1)\pmod{2};\\
{\left({{\left[\frac n2\right]}\atop{\left[\frac{ n-i}2\right]}}\right)}&
i\geq 0,(n,i)\neq(0,1)\pmod{2}.\end{cases}
\eeqa

Here the value $|\Phi|$ means the super degree of the operator $\Phi$ which shows  the operator $\Phi$ is Fermionic or Bosonic.
The supersymmetric derivative $D$ satisfies the supersymmetric analog of the Leibniz rule
\eqa D(ab)=D(a)b+(-1)^{|a|}a D(b),\eeqa
where $a$ is a homogeneous element of $A$.
We introduce the even and odd time variables $(t_2,t_3,t_6,t_7,\cdot)$ and the following definition of even and odd flows
\eqa
D_{4i-2}=\frac{\d}{\d t_{4i-2}},\ \ D_{4i-1}=\frac{\d}{\d t_{4i-1}}+\sum_{j=1}^{\infty}t_{4j-1}\frac{\d}{\d t_{4i+4j-2}}.
\eeqa
We recall that the supercommutator is defined as $[X,Y]=XY-(-1)^{|X||Y|}YX.$
The bracket has a property as $[X,YZ]=[X,Y]Z+(-1)^{|X||Y|}Y[X,Z].$
Then $D^2=\frac12[D,D]=\d.$
This  family of
infinite odd and even flows satisfy a nonabelian Lie superalgebra whose
commutation relations are
$$
[D_{4i-2},D_{4j-2}]=0~,\quad [D_{4i-2},D_{4j-1}]=0~,\quad
[D_{4i-1},D_{4j-1}]=-2 D_{4i+4j-2}~,
$$
\eqa[D_{4i-2},D]=0~,\quad [D_{4i-1},D]=0.\eeqa
For any operator $A=\sum_{i\in\Z} f_i D^i\in\cA$ and homogeneous operators $P,Q$, its
nonnegative projection, negative projection, adjoint operator are
respectively defined as
\begin{align}\label{Apm}
&A_+=\sum_{i\geq0} f_i  D^i, \quad A_-=\sum_{i<0} f_i  D^i, \quad A^*=\sum_{i\in\Z}(- D)^i\cdot f_i,\\
& (PQ)^*=(-1)^{|P||Q|}Q^*P^*,\ \ (P^{-1})^*=(-1)^{|P|}(P^{*})^{-1}.
\end{align}
Also for the operator $D^k$, the adjoint operator is defined as
\begin{align}
(D^k)^*=(-1)^{\frac{k(k+1)}{2}}D^k.
\end{align}

Basing on definitions  in \cite{StanciuBKP}, the Lax operator of the supersymmetric  BKP hierarchy has a form as
\begin{equation} \label{PhP}
L= D+\sum_{i\ge1}u_i  D^{1-i},\ \ \ u_2=-\frac12u_1^{[1]}.\end{equation}
The supersymmetric  BKP hierarchy is defined by the following
Lax equations
\begin{align}\label{bkpLax}
& D_{4k-2} L=[(L^{4k-2})_+,  L], \quad  \quad  D_{4k-1}  L=[(L^{4k-1})_+,  L]-2L^{4k}, \ \ k\geq 1.
\end{align}

One can rewrite the operator $L$ in a dressing form as
\begin{equation} \label{PPh}
L=\Phi D\Phi^{-1},
\end{equation}
where
\begin{align} \label{dreop}
\Phi=1+\sum_{i\ge 1}a_i D^{-i},
\end{align}
 satisfy
\begin{equation}\label{phipsi}
\Phi^*= D\Phi^{-1} D^{-1}.
\end{equation}

We call the eq.\eqref{phipsi} the B type condition of the supersymmetric  BKP hierarchy.
Given $L$, the dressing operator $\Phi$  is determined uniquely up to a multiplication to the
right by operators with
constant coefficients. The dressing operator $\Phi$ takes values in a B type Volterra group. The supersymmetric  BKP hierarchy \eqref{bkpLax}
can also be redefined as
\begin{align}\label{satoeq}
&\frac{\pd \Phi}{\pd t_{4k-2}}=- (L^{4k-2})_-\Phi, \quad \frac{\pd \Phi}{\pd t_{4k-1}}=- (L^{4k-1})_-\Phi,
\end{align}
with $k\geq 1$.

With the above preparation, it is time to  construct additional symmetries for the supersymmetric  BKP hierarchy in the next section.

\section{Additional symmetries of the supersymmetric BKP hierarchy}

In this section, we are to construct  additional symmetries for the supersymmetric BKP hierarchy by using the Orlov--Schulman operators whose coefficients
depend explicitly on the time variables of the
hierarchy. The Orlov--Schulman operators $M_i$ and auxiliary operator $\Q$ are constructed in the following
dressing structure
\begin{equation*}\label{}
M_i=\Phi\Gm_i\Phi^{-1}, \quad\ i=0,1;\ \ \Q=\Phi Q\Phi^{-1},
\end{equation*}
where
\eqa \notag
\Gm_0 &=&x+\frac12\sum_{k\geq 1}(4k-2)t_{4k-2} D^{4k-4}+\frac12(4k-1)t_{4k-1} D^{4k-3}\\
&&-\frac12\sum_{k\geq 1}t_{4k-1} \d^{2k-2}Q+\sum_{i,j\geq 1}(i-j)t_{4i-1}t_{4j-1} \d^{2i+2j-2},\\
\Gm_1 &=&\theta+\sum_{k\geq 1}t_{4k-1} \d^{2k-1},
\eeqa
where $Q=\d_{\theta}-\theta\d.$

Then one can get the following lemma.

\begin{lem}
The operators $\Gamma_j, Q$ satisfy
\[
[D_{4i-2}-D^{4i-2}, \Gamma_j]&=&[D_{4i-1}-D^{4i-1}, \Gamma_j]=0;\ \ j=0,1,\]

\[
[D_{4i-2}-D^{4i-2}, Q]&=&[D_{4i-1}-D^{4i-1}, Q]=0,
\]

\[
[Q, \Gamma_0]&=&-\Gamma_1,\ \ \ [Q, \Gamma_1]=1,\ \ \ [\d, \Gamma_0]=1.
\]

\end{lem}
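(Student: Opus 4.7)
The plan is direct verification of each of the seven identities using the super-Leibniz formula \eqref{Apm} together with the elementary commutation relations $[\partial,x]=1$, $\{D,\theta\}=1$, $\{Q,\theta\}=1$, $\{D,Q\}=0$, $Q^2=-\partial$, and $[Q,x]=-\theta$, all of which are immediate from $D=\partial_\theta+\theta\partial$ and $Q=\partial_\theta-\theta\partial$. The three identities in the last block of the lemma serve as a warm-up. For $[\partial,\Gamma_0]=1$ only the $x$-term contributes, since $\partial$ commutes with $\theta$, $Q$, every power of $D$ and $\partial$, and all time variables. For $[Q,\Gamma_1]=1$ the leading $1$ comes from $\{Q,\theta\}=1$, while the $t_{4k-1}\partial^{2k-1}$ tail cancels by $\{Q,t_{4k-1}\}=0$ and $[Q,\partial]=0$. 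For $[Q,\Gamma_0]=-\Gamma_1$ the key inputs are $[Q,x]=-\theta$, $\{D,Q\}=0$, and $Q^2=-\partial$: the $-\frac{1}{2}t_{4k-1}\partial^{2k-2}Q$ block, after applying $Q^2=-\partial$, reproduces exactly the $-\sum_k t_{4k-1}\partial^{2k-1}$ tail of $-\Gamma_1$, while the $t_{4k-1}D^{4k-3}$ block and the quadratic $t$-block contribute zero thanks to $\{Q,D\}=0$ and the parity of $t_{4i-1}t_{4j-1}$.

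The flow commutators split naturally into even and odd cases. Using $D^{4i-2}=\partial^{2i-1}$, the identities $[D_{4i-2}-D^{4i-2},\Gamma_j]=0$ reduce to the observation that $\partial$ commutes with every building block of $\Gamma_j$ other than $x$: both sides vanish on $\Gamma_1$ and both equal $(2i-1)\partial^{2i-2}$ on $\Gamma_0$. For $\Gamma_1$ with $D_{4i-1}$ the calculation is equally short: the left side picks up $\partial^{2i-1}$ from $\partial_{t_{4i-1}}(t_{4i-1}\partial^{2i-1})$, and the right side yields $\{D^{4i-1},\theta\}=D^{4i-2}=\partial^{2i-1}$ via the super-binomial formula with $\theta^{[1]}=1$; the $t_{4k-1}\partial^{2k-1}$ pieces are killed because $D^{4i-1}$ anticommutes with $t_{4k-1}$ and commutes with $\partial$. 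The $Q$-identities $[D_{4i-2}-D^{4i-2},Q]=[D_{4i-1}-D^{4i-1},Q]=0$ follow at once: $D_{4i-2}Q=D_{4i-1}Q=0$ since $Q$ depends on no time variable, $[\partial^{2i-1},Q]=0$ by $[\partial,Q]=0$, and $\{D^{4i-1},Q\}=0$ by iterating $\{D,Q\}=0$.

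The main obstacle will be $[D_{4i-1}-D^{4i-1},\Gamma_0]=0$, where sign bookkeeping across anticommuting odd times and odd operators becomes delicate. On the left I will expand $D_{4i-1}\Gamma_0$ term by term: $\partial_{t_{4i-1}}$ hits three of the $t$-dependent blocks, producing a multiple of $D^{4i-3}$, a multiple of $\partial^{2i-2}Q$, and a linear $t$-tail from the quadratic block, while the coupling $\sum_j t_{4j-1}\partial_{t_{4i+4j-2}}$ hits only the $\frac{1}{2}(4k-2)t_{4k-2}D^{4k-4}$ sum and contributes a further $t$-tail. On the right I will reduce $[D^{4i-1},\Gamma_0]$ to $[D^{4i-1},x]=\theta D^{4i-2}+(2i-1)D^{4i-3}$ (from the super-binomial coefficients with $x^{[1]}=\theta$ and $x^{[2]}=1$) plus the commutator with the $\frac{1}{2}(4k-1)t_{4k-1}D^{4k-3}$ block; the $t_{4k-1}\partial^{2k-2}Q$ block and the quadratic $t$-block drop out because $\{D^{4i-1},Q\}=0$ and $D^{4i-1}$ commutes with bosonic bilinears in odd $t$'s. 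Matching the two sides then hinges on the rewriting $\theta\partial^{2i-1}=\frac{1}{2}D^{4i-3}-\frac{1}{2}\partial^{2i-2}Q$ that follows from $D-Q=2\theta\partial$; after this substitution the $D^{4i-3}$, $\partial^{2i-2}Q$, and linear-in-$t$ components will align term by term, completing the verification.
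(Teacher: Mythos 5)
Your proposal is correct and follows essentially the same route as the paper's proof: a term-by-term direct verification whose only nontrivial ingredients are the expansion $[D^{4i-1},x]=\theta D^{4i-2}+(2i-1)D^{4i-3}$ and the rewriting $\theta\partial=\tfrac12(D-Q)$, exactly as in the paper. (One caveat inherited from the paper itself: the cancellation of the linear $t$-tails in $[D_{4i-1}-D^{4i-1},\Gamma_0]$ works only with $D_{4i-1}=\partial_{t_{4i-1}}-\sum_j t_{4j-1}\partial_{t_{4i+4j-2}}$, the sign actually used in the paper's proof, rather than the $+$ sign appearing in its Section~2 definition, so you should fix that convention explicitly when carrying out the tail matching.)
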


\begin{proof}
For the proof, one can do the following direct calculation
\[
[D_{4i-2}-D^{4i-2}, \Gamma_0]&=&\frac12(4i-2) D^{4i-4}-[D^{4i-2}, x]\\
&=&(2i-1) \d^{2i-2}-[\d^{2i-1}, x]=0,
\]

\[&& \notag
[D_{4i-1}-D^{4i-1}, \Gamma_0]\\
&=&[\frac{\d}{\d t_{4i-1}}-\sum_{j=1}^{\infty}t_{4j-1}\frac{\d}{\d t_{4i+4j-2}}-D^{4i-1}, \Gamma_0]\\
&=&\frac12(4i-1) D^{4k-3}-\frac12\d^{2i-2}Q+2\sum_{j\geq 1}(i-j)t_{4j-1} \d^{2i+2j-2}\\ \notag
&&-\sum_{j=1}^{\infty}(2i+2j-1)t_{4j-1}D^{4i+4j-4}-[D^{4i-1},x-\frac12\sum_{k\geq 1}(4k-1)t_{4k-1} D^{4k-3}].
\]
Because
\[\notag [D^{4i-1},x]=[\d^{2i-1}D,x]=(2i-1)D^{4i-3}+\d^{2i-1}\theta\d_x=\frac12(4i-1) D^{4k-3}-\frac12\d^{2i-2}Q,\]

\[
[D^{4i-1},\frac12\sum_{k\geq 1}(4k-1)t_{4k-1} D^{4k-3}]=-\sum_{j\geq 1}(4j-1)t_{4j-1} D^{4i+4j-4},
\]
then
\[
[D_{4i-1}-D^{4i-1}, \Gamma_0]&=&0.
\]

For $ \Gamma_1,$ we have

\[
[D_{4i-2}-D^{4i-2}, \Gamma_1]&=&0,
\]

\[\notag
[D_{4i-1}-D^{4i-1}, \Gamma_1]&=&[\frac{\d}{\d t_{4i-1}}-\sum_{j=1}^{\infty}t_{4j-1}\frac{\d}{\d t_{4i+4j-2}}-D^{4i-1}, \theta+\sum_{k\geq 1}t_{4k-1} \d^{2k-1}]\\
&=&\d^{2i-1}-D^{4i-2}=0.
\]
For $ Q$, the following identities hold

\[
[D_{4i-2}-D^{4i-2}, Q]&=&0,
\]

\[
[D_{4i-1}-D^{4i-1}, Q]&=&[\frac{\d}{\d t_{4i-1}}-\sum_{j=1}^{\infty}t_{4j-1}\frac{\d}{\d t_{4i+4j-2}}-D^{4i-1}, Q]=0,
\]

\[\notag
[Q, \Gamma_0]&=&[Q, x]-[Q, \frac12\sum_{k\geq 1}t_{4k-1} \d^{2k-2}Q],\\
&=&-\theta-\sum_{k\geq 1}t_{4k-1} \d^{2k-1}=-\Gamma_1,
\]
\[
[Q, \Gamma_1]=[Q, \theta]=1.
\]
\end{proof}

Then it is easy to get  the following lemma by dressing structures.
\begin{lem}\label{thm-Mw}
The operators $M_j,\Q,L$ satisfy
\[
[\Q, M_0]&=&-M_1,\ \ \ [\Q, M_1]=1,\ \ \ [L^2, M_0]=1,
\]
\begin{equation}\label{bkpMt}
D_k M_j=[(L^k)_+,M_j],\ \ \ D_k \Q=[(L^k)_+,\Q],\quad\ k=4i-2,4i-1,\ i \in\Z_+.
\end{equation}

\end{lem}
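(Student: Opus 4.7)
The plan is to ``dress'' each identity of Lemma 3.1 by conjugation with $\Phi$, exploiting the definitions $M_j=\Phi\Gamma_j\Phi^{-1}$, $\Q=\Phi Q\Phi^{-1}$ and the dressing form $L=\Phi D\Phi^{-1}$. The three algebraic relations and the two flow relations both reduce, via Lemma 3.1, to standard manipulations with the Sato equations \eqref{satoeq}.

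For the algebraic identities, I would use that conjugation by $\Phi$ is a super-algebra homomorphism and hence preserves super-commutators. This immediately gives
\[
[\Q,M_0]=\Phi[Q,\Gamma_0]\Phi^{-1}=-\Phi\Gamma_1\Phi^{-1}=-M_1,\qquad [\Q,M_1]=\Phi[Q,\Gamma_1]\Phi^{-1}=1.
\]
For $[L^2,M_0]$, the point is that $L^2=\Phi D^2\Phi^{-1}=\Phi\d\Phi^{-1}$ since $D^2=\d$, so that $[L^2,M_0]=\Phi[\d,\Gamma_0]\Phi^{-1}=1$ by the last identity in Lemma~3.1.

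For the flow equations, I would apply $D_k$ to the dressing formula for $M_j$ and use the super-Leibniz rule together with $D_k(\Phi^{-1})=-\Phi^{-1}D_k(\Phi)\Phi^{-1}$, obtaining
\[
D_k M_j=[D_k(\Phi)\Phi^{-1},M_j]+\Phi D_k(\Gamma_j)\Phi^{-1}.
\]
Lemma~3.1 says $D_k(\Gamma_j)=[D^k,\Gamma_j]$, so the second term equals $\Phi[D^k,\Gamma_j]\Phi^{-1}=[L^k,M_j]$. The Sato equations \eqref{satoeq} give $D_k(\Phi)\Phi^{-1}=-(L^k)_-$, so after splitting $L^k=(L^k)_++(L^k)_-$ I get
\[
D_k M_j=-[(L^k)_-,M_j]+[L^k,M_j]=[(L^k)_+,M_j].
\]
The same computation with $Q$ in place of $\Gamma_j$, using $D_k(Q)=[D^k,Q]$ from Lemma~3.1, yields the flow equation for $\Q$.

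The delicate point, and probably the only place where real care is needed, is the odd-flow case $k=4i-1$: the definition of $D_{4i-1}$ carries the extra drift $\sum_j t_{4j-1}\,\d/\d t_{4i+4j-2}$, so the identification $D_{4i-1}(\Phi)\Phi^{-1}=-(L^{4i-1})_-$ requires combining \eqref{satoeq} with the even Sato equations and tracking the compensating term $-2L^{4i}$ that appears in the Lax equation \eqref{bkpLax}. Once this compatibility is established for $\Phi$, the rest of the argument is a routine dressing calculation parallel to the classical BKP case.
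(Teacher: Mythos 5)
Your proposal is correct and follows essentially the same route as the paper: both rest on dressing the identities of Lemma~3.1 by $\Phi$ and invoking the Sato equations \eqref{satoeq} to replace $(D_k\Phi)\Phi^{-1}$ by $-(L^k)_-$ (the paper conjugates the whole relation $[D_k-D^k,\Gamma_j]=0$ and then unpacks $\Phi D_k\Phi^{-1}=D_k-(D_k\Phi)\Phi^{-1}$, which is the same computation as your direct differentiation of $M_j=\Phi\Gamma_j\Phi^{-1}$). Your closing remark on the odd flows is also the right reading of the paper, which uses \eqref{satoeq} in the form $D_{4i-1}\Phi=-(L^{4i-1})_-\Phi$, consistent with the Lax equation once one notes $[L^{4k-1},L]=2L^{4k}$ for the supercommutator of odd operators.
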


\begin{proof}
The dressing structure
\[
\Phi[D_{4i-1}-D^{4i-1}, \Gamma_1]\Phi^{-1}=0;\]
will lead to
\[
[\Phi D_{4i-1}\Phi^{-1}-\Phi D^{4i-1}\Phi^{-1}, M_1]=0;\]
and further to
\[
[D_{4i-1}-\Phi_{4i-1}\Phi^{-1}+\sum_{j=0}^{\infty}t_{4j-1}\Phi_{4i+4j-2}\Phi^{-1}-L^{4i-1}, M_1]=0.\]
Then we get
\[[D_{4i-1}-(D_{4i-1}\Phi)\Phi^{-1}-L^{4i-1}, M_1]=0;\]
and using eq.\eqref{satoeq} we can derive
\[[D_{4i-1}-(L^{4i-1})_+, M_1]=0.\]
The other identities can be proved using the similar dressing techniques.
\end{proof}

From now on, we will introduce  the following operator $B_{mklp}$ defined as
\begin{align}\label{defBoperator}
B_{mklp}=M_0^kM_1^l\Q^pL^{2m}-(-1)^{pl+m+p+l}L^{2m-1}(\Q^p)M_1^{l}M_0^{k}L,\end{align}
where $k,m \geq 0; l , p = 0, 1.$
This operator is the generator of the additional symmetry of the SBKP hierarchy which shows  the difference between generators of the  SKP and SBKP hierarchies. For the SKP case in \cite{maninsuperKP}, in the construction of $B_{mklp}$, it contains only one term.

Then the following proposition can be got.
\begin{prp}The operator $B_{mnlp}$ satisfies the following flow equations
\begin{align}\label{Bflow}
& D_{4k-2}  B_{mnlp}=-[(L^{4k-2})_-,  B_{mnlp}],\ \  D_{4k-1} B_{mnlp}=-[(L^{4k-1})_-,  B_{mnlp}].
\end{align}
\end{prp}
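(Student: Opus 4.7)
The plan is to apply the super Leibniz rule to both summands of $B_{mklp}$, using Lemma 3.2 (for $M_j$ and $\Q$) and the Lax equation (for $L$). Since $D_n$ is a super-derivation of parity $n\bmod 2$ and $[(L^n)_+,\cdot]$ is a super-derivation of the same parity, the identities $D_n M_j = [(L^n)_+, M_j]$, $D_n\Q = [(L^n)_+, \Q]$, together with $D_n L = [(L^n)_+, L]$ for $n=4k-2$, propagate to every polynomial in these operators by super-Leibniz, yielding at once
\[ D_n B_{mklp} = [(L^n)_+, B_{mklp}] \qquad (n=4k-2). \]
Using the splitting $[(L^n)_+, B_{mklp}] = [L^n, B_{mklp}] - [(L^n)_-, B_{mklp}]$, the even-flow identity reduces to the claim $[L^n, B_{mklp}] = 0$.

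For the odd flow $n=4k-1$ the Lax equation contributes the extra summand $-2L^{4k}$ to $D_n L$. Iterating super-Leibniz with the odd-derivation sign $(-1)^{|X|}$ yields a recursion which shows that this correction \emph{cancels} for every even power of $L$ and \emph{persists unchanged} for every odd power, so that $D_n L^{2m} = [(L^n)_+, L^{2m}]$ but $D_n L^{2m-1} = [(L^n)_+, L^{2m-1}] - 2L^{4k+2m-2}$. Therefore the first summand of $B_{mklp}$, which carries only the even power $L^{2m}$, produces no extra contribution, while the second summand, containing the two odd $L$-factors $L^{2m-1}$ and the trailing $L$, produces two explicit remainder terms. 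One obtains
\[ D_n B_{mklp} = [(L^n)_+, B_{mklp}] + \mathcal{R} \qquad (n=4k-1), \]
where $\mathcal{R}$ is an explicit sum of two monomials built from $L^{4k+2m-2}$, $L^{4k}$, the factors $\Q^p M_1^l M_0^k$ and the sign $(-1)^{pl+m+p+l}$; the odd-flow identity becomes $[L^n, B_{mklp}] = -\mathcal{R}$.

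Both reductions, namely $[L^n, B_{mklp}] = 0$ for $n=4k-2$ and $[L^n, B_{mklp}] = -\mathcal{R}$ for $n=4k-1$, are to be established by direct super-commutator calculation from the basic relations $[L^2, M_0] = 1$ (Lemma 3.2), $\{L, M_1\} = 1$ (obtained by dressing $\{D, \Gamma_1\} = 1$) and $[L, \Q] = 0$ (from $\{D, Q\} = 0$), together with their iterated consequences. The two-term structure of $B_{mklp}$ --- the reversed ordering $\Q^p M_1^l M_0^k$ of the Orlov--Schulman factors in the second summand, their sandwiching between $L^{2m-1}$ and $L$, and the overall sign $(-1)^{pl+m+p+l}$ --- is manufactured precisely so that the cascade of iterated commutators produced when $L^n$ is pushed past $M_0^k M_1^l \Q^p$ in the first summand cancels, term by term, with the analogous cascade from the reversed arrangement in the second summand, leaving exactly $-\mathcal{R}$ for odd $n$ and zero for even $n$.

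The main obstacle is the combinatorial bookkeeping of the super-commutator signs and of the iterated commutators: $[L^n, M_0]$ is not central, and $[L, M_0]$ is itself a non-trivial operator (involving $\Phi\theta\Phi^{-1}$ and the odd time variables), so that commuting $L^n$ through $M_0^k$ generates a cascade of lower-order terms that must match across the two summands. The required cancellation is the supersymmetric reflection of the B-type condition $\Phi^* = D\Phi^{-1}D^{-1}$, which fixes both the reversed ordering and the sign of the second summand; the whole calculation parallels the analogous construction of additional-symmetry generators in the bosonic BKP hierarchy \cite{BKP-DS}.
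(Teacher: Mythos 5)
Your route (super-Leibniz in the dressed picture) differs from the paper's, which instead establishes the undressed identity $[D_n-D^n,\Gamma_0^k\Gamma_1^lQ^p\d^m]=0$ from Lemma 3.1 (each factor supercommutes with $D_n-D^n$, which is a super-derivation) and then conjugates by $\Phi$ via the Sato equation $D_n\Phi=-(L^n)_-\Phi$. Both routes arrive at the same intermediate identity $D_nB_{mklp}=[(L^n)_+,B_{mklp}]$; the paper's has the advantage that the odd-flow anomaly $-2L^{4k}$ never has to be tracked (it is the dressed image of $-[D^{4k-1},D]$, and in the undressed picture no such term is generated for the first summand, since $\d^m=D^{2m}$ commutes with $D^{4k-1}$). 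Up to that point your argument is sound and, if anything, more explicit than the paper's about the second summand.

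The genuine gap is your final reduction. You propose to bridge from $[(L^n)_+,B_{mklp}]$ to the printed $-[(L^n)_-,B_{mklp}]$ by proving $[L^n,B_{mklp}]=0$ for even $n$. That identity is false: using $[L^2,M_0]=1$,
\begin{equation*}
[L^2,\,B_{1100}]=[L^2,\,M_0L^2+LM_0L]=L^2+L^2=2L^2\neq0,
\end{equation*}
so the ``direct super-commutator calculation'' you defer to cannot succeed, and no choice of ordering or sign in the second summand rescues it. The identity that the dressing argument actually yields, that is used in the proof of Proposition 3.5 (where $(D_kB_{mnlp})_-$ is replaced by $[(L^k)_+,B_{mnlp}]_-$), and that is stated in the two-component analogue (Proposition 6.3), is the $(L^n)_+$ form; the statement you were asked to prove differs from it by precisely the nonvanishing term $[L^n,B_{mklp}]$. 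You should therefore stop at $D_nB_{mklp}=[(L^n)_+,B_{mklp}]$ rather than manufacture the extra step. The same caution applies to the odd flows: the remainder $\mathcal{R}$ you extract from the odd powers $L^{2m-1}$ and $L$ in the second summand is real (neither your sketch nor the paper's one-line proof shows it vanishes), but the proposed identity $[L^n,B_{mklp}]=-\mathcal{R}$ is an additional unverified claim, not a bookkeeping formality.
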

\begin{proof}
The lemma can be proved by dressing the following identities by $\Phi$
\begin{align}
& [D_{4k-2}-D^{4k-2},  \Gamma_0^n\Gamma_1^lQ^p\d^{m}]=[D_{4k-1}-D^{4k-1}, \Gamma_0^n\Gamma_1^lQ^p\d^{m}]=0.
\end{align}
\end{proof}
To prove that $B_{mnlp}$  satisfies the B type condition, we need the following lemma.
\begin{lem}\label{BtypM}
The operators $M_i$ satisfy the following conjugate identities,
\eqa
M_i^*
=(-1)^iDL^{-1}M_iL D^{-1},\ \  \Q^*
=-DL^{-1}\Q L D^{-1}.\eeqa
\end{lem}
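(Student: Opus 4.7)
The plan is to dress the claimed identity down to a purely algebraic statement about $\Gamma_0, \Gamma_1, Q$, and then verify that algebraic statement term by term.

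First I would apply the Koszul rule $(PR)^* = (-1)^{|P||R|}R^*P^*$ to the dressings $M_i = \Phi \Gamma_i \Phi^{-1}$ and $\Q = \Phi Q \Phi^{-1}$. Because $\Phi$ is even, $(\Phi^{-1})^* = (\Phi^*)^{-1}$, so the adjoints collapse into $M_i^* = (\Phi^*)^{-1}\Gamma_i^*\Phi^*$ and $\Q^* = (\Phi^*)^{-1}Q^*\Phi^*$. Inserting the B-type condition $\Phi^* = D\Phi^{-1}D^{-1}$ from \eqref{phipsi}, so that $(\Phi^*)^{-1} = D\Phi D^{-1}$, rewrites the left-hand sides as $D\Phi D^{-1}\,\Gamma_i^*\,D\Phi^{-1}D^{-1}$ and $D\Phi D^{-1}\,Q^*\,D\Phi^{-1}D^{-1}$ respectively.

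Next I would unpack the proposed right-hand side. Using $L = \Phi D\Phi^{-1}$ and $L^{-1} = \Phi D^{-1}\Phi^{-1}$, the outer $\Phi$'s cancel to give $D L^{-1} M_i L D^{-1} = D\Phi D^{-1}\,\Gamma_i\,D\Phi^{-1}D^{-1}$, and analogously for $\Q$. Cancelling the common invertible prefactor $D\Phi D^{-1}$ and postfactor $D\Phi^{-1}D^{-1}$ on both sides, the lemma reduces to the three bare identities $\Gamma_0^* = \Gamma_0$, $\Gamma_1^* = -\Gamma_1$, and $Q^* = -Q$, all of which live in the super-pseudodifferential algebra and no longer involve the dressing operator.

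Finally I would verify these bare identities by direct computation from the explicit formulas of $\Gamma_0, \Gamma_1, Q$, using $x^* = x$, $\theta^* = \theta$, the adjoint rule $(D^n)^* = (-1)^{n(n+1)/2}D^n$, the Koszul product rule, and the anticommutation $D\, t_{4k-1} = -t_{4k-1}\, D$ coming from the odd Grassmann parity of the odd times. The identities $\Gamma_1^* = -\Gamma_1$ and $Q^* = -Q$ are nearly immediate from the shape of those operators. The main obstacle will be $\Gamma_0^* = \Gamma_0$, where each summand couples three independent sign sources --- the Koszul sign, the sign in $(D^n)^*$, and the sign from moving $t_{4k-1}$ past $D^n$ --- and they must conspire to give self-adjointness in every term. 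The cleanest path is to first rewrite $Q = D - 2\theta D^2$ so that every summand of $\Gamma_0$ takes the form (coefficient)$\,\cdot\,D^n$; the adjoint formula can then be applied mechanically, and the three sign contributions line up to leave each term invariant, in particular in the matched pair $\tfrac12(4k-1)t_{4k-1}D^{4k-3}$ and $-\tfrac12\, t_{4k-1}\partial^{2k-2}Q$ that arises from splitting the $Q$-terms.
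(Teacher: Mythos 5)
Your proposal is correct and follows essentially the same route as the paper: the paper likewise writes $M_i^*=(\Phi^*)^{-1}\Gamma_i^*\Phi^*$, inserts the B-type condition $\Phi^*=D\Phi^{-1}D^{-1}$ together with $\Gamma_i^*=(-1)^i\Gamma_i$, $Q^*=-Q$, and then recognizes $\Phi D^{\pm1}\Phi^{-1}=L^{\pm1}$ to reassemble the conjugation $DL^{-1}(\cdot)LD^{-1}$. The only difference is that the paper simply asserts the bare adjoint identities for $\Gamma_0,\Gamma_1,Q$, whereas you propose to verify them explicitly; that is a harmless (indeed welcome) addition.
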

\begin{prf}
Using
\[
 \Phi^*=D\Phi^{-1} D^{-1},\ \Gamma_i^*
=(-1)^i\Gamma_i,\ \ Q^*=-Q,\]
  the following calculations
\[\notag
M_i^* =\Phi^{*-1}\Gamma_i^*\Phi^*=(-1)^iD\Phi D^{-1}\Gamma_i D\Phi^{-1} D^{-1}
=(-1)^iD\Phi D^{-1}\Phi^{-1}M_i\Phi D\Phi^{-1} D^{-1},\]

will lead to this lemma. The anti adjoint property of $\Q$ can be proved in a similar way.
\end{prf}

It is easy to check the following proposition holds basing on the Lemma \ref{BtypM} above.
\begin{prp}\label{asym}
The operator $B_{mklp}$ satisfies a  B type condition, namely
\begin{equation}
B_{mklp}^*=-D  B_{mklp} D^{-1}.
\end{equation}
\end{prp}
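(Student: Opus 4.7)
The plan is to verify $B_{mklp}^* = -D B_{mklp} D^{-1}$ by a direct computation, reading off each summand's adjoint with the help of Lemma~\ref{BtypM} and the super-adjoint product rule $(PQ)^*=(-1)^{|P||Q|}Q^*P^*$. First I would extend Lemma~\ref{BtypM} to powers: because the conjugation $X\mapsto DL^{-1}XLD^{-1}$ telescopes under multiplication, one immediately has $(M_0^k)^*=DL^{-1}M_0^kLD^{-1}$, $(M_1^l)^*=(-1)^l DL^{-1}M_1^lLD^{-1}$ and $(\Q^p)^*=(-1)^p DL^{-1}\Q^pLD^{-1}$ for $k\ge 0$ and $l,p\in\{0,1\}$. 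Dressing $(D^n)^*=(-1)^{n(n+1)/2}D^n$ by $\Phi$ together with $\Phi^*=D\Phi^{-1}D^{-1}$ yields $(L^n)^*=(-1)^{n(n+1)/2}DL^nD^{-1}$; in particular $L^*=-DLD^{-1}$, $(L^{2m})^*=(-1)^m DL^{2m}D^{-1}$ and $(L^{2m-1})^*=(-1)^m DL^{2m-1}D^{-1}$.

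Next I would apply the adjoint rule to each summand of $B_{mklp}$, using the super-grades $|M_0|=0$ and $|M_1|=|\Q|=|L|=1$. For the first summand $M_0^kM_1^l\Q^pL^{2m}$ the only nontrivial cross-sign produced when reversing the factor order is $(-1)^{lp}$; combined with the individual adjoint signs $(-1)^m,(-1)^p,(-1)^l$ the overall prefactor is $(-1)^{pl+m+p+l}$, and the internal $D\cdot D^{-1}=1$ and $L\cdot L^{-1}=1$ cancellations collapse the operator string, giving
\begin{equation*}
(M_0^kM_1^l\Q^pL^{2m})^*=(-1)^{pl+m+p+l}\,D L^{2m-1}\Q^p M_1^l M_0^k L\,D^{-1}.
\end{equation*}
For the second summand $L^{2m-1}\Q^pM_1^lM_0^kL$, whose five factors have grades $1,p,l,0,1$, the reversal cross-sign is $(-1)^{1+pl}$, which together with the adjoint signs $-1,\,1,\,(-1)^l,\,(-1)^p,\,(-1)^m$ again produces the total prefactor $(-1)^{pl+m+p+l}$, and the same telescoping shrinks the operator part to $DM_0^kM_1^l\Q^pL^{2m}D^{-1}$.

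Assembling the two results in the definition~\eqref{defBoperator}, the sign $(-1)^{pl+m+p+l}$ appearing there squares against the identical prefactor from the adjoint of the second summand and disappears, while it survives as a clean prefactor of the first, so
\begin{equation*}
B_{mklp}^* = (-1)^{pl+m+p+l}\,D L^{2m-1}\Q^p M_1^l M_0^k L\,D^{-1} - D M_0^k M_1^l \Q^p L^{2m} D^{-1} = -D B_{mklp} D^{-1}.
\end{equation*}
The only substantive obstacle is the super-sign bookkeeping: one must track the grade of every factor through each use of $(PQ)^*=(-1)^{|P||Q|}Q^*P^*$ and verify that the two independently computed prefactors coincide and match exactly the sign chosen in~\eqref{defBoperator}, which is precisely what makes the two summands of $B_{mklp}$ swap under conjugation.
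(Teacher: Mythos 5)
Your proposal is correct and follows essentially the same route as the paper: apply the super-adjoint rule $(PQ)^*=(-1)^{|P||Q|}Q^*P^*$ to each of the two summands of $B_{mklp}$, substitute the conjugation identities of Lemma~\ref{BtypM} (together with $L^*=-DLD^{-1}$), let the conjugations telescope, and observe that the two summands swap under the adjoint with exactly the sign $(-1)^{pl+m+p+l}$ built into the definition. Your sign bookkeeping (cross-signs $(-1)^{lp}$ and $(-1)^{1+pl}$ for the two reversals) matches the paper's computation line by line, just spelled out in more detail.
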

\begin{prf}
Using the Proposition \ref{BtypM}, the following calculation will lead to  this proposition

\begin{eqnarray*}B_{mklp}^*
&=&(M_0^kM_1^l\Q^pL^{2m}-(-1)^{pl+m+p+l}L^{2m-1}\Q^pM_1^{l}M_0^{k}L)^*\\
&=&(-1)^{pl}L^{2m*}(\Q^p)^*M_1^{l*}M_0^{k*}+(-1)^{m+p+l}L^{*}M_0^{k*}M_1^{l*}(\Q^p)^*L^{2m-1*}\\
&=&(-1)^{pl+m+p+l}DL^{2m-1}\Q^pM_1^{l}M_0^{k}LD^{-1}-DM_0^kM_1^l\Q^pL^{2m}D^{-1}\\
&=&-D(M_0^kM_1^l\Q^pL^{2m}-(-1)^{pl+m+p+l}L^{2m-1}\Q^pM_1^{l}M_0^{k}L)D^{-1}.
\end{eqnarray*}
\end{prf}

Basing on above proposition,  it is reasonable to define additional flows of the supersymmetric BKP hierarchy as
\begin{align}\label{addiflow}
&D_{mklp} L=[-({B}_{mklp})_-, L],\ \ k ,m\geq 0; l, p = 0, 1.
\end{align}
\begin{prp}\label{thm-st}
The flows \eqref{addiflow} commute with the flows of  the  supersymmetric BKP hierarchy. Namely, one has
\begin{equation}\label{st}
\left[D_{mnlp}, D_{k}\right]=0, \quad \ \ m, n\geq 0; l, p = 0, 1, ~~k=4i-2,4i-1,\ i \in\Z_+,
\end{equation}
which holds in the sense of acting on  $\Phi$.
\end{prp}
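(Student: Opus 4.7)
The plan is to recast the additional flow \eqref{addiflow} in Sato form on the dressing operator $\Phi$ and then verify commutativity by a direct calculation on $\Phi$.

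First I would establish the Sato equation for the additional flow. Differentiating $L=\Phi D\Phi^{-1}$ gives $D_{mnlp}L=[(D_{mnlp}\Phi)\Phi^{-1},L]$, and comparing with \eqref{addiflow} shows that $(D_{mnlp}\Phi)\Phi^{-1}+(B_{mnlp})_-$ commutes with $L$; since this combination is purely negative in $D$ while $L$ has leading symbol $D$, it must vanish, yielding
\begin{equation*}
D_{mnlp}\Phi=-(B_{mnlp})_-\,\Phi.
\end{equation*}
Combined with the hierarchy's Sato equation $D_k\Phi=-(L^k)_-\Phi$ from \eqref{satoeq}, this puts both families of flows in a uniform form.

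Next I would compute, using that $D_{mnlp}$ and $D_k$ are ordinary time derivations,
\begin{align*}
[D_{mnlp},D_k]\Phi &= -(D_{mnlp}L^k)_-\Phi+(L^k)_-(B_{mnlp})_-\Phi\\
&\quad +(D_kB_{mnlp})_-\Phi-(B_{mnlp})_-(L^k)_-\Phi,
\end{align*}
and substitute $D_{mnlp}L^k=[-(B_{mnlp})_-,L^k]$, obtained by iterating \eqref{addiflow}, together with $D_kB_{mnlp}=-[(L^k)_-,B_{mnlp}]$ from \eqref{Bflow}. Decomposing $L^k=(L^k)_++(L^k)_-$ and $B_{mnlp}=(B_{mnlp})_++(B_{mnlp})_-$ inside the two projected brackets, the mixed pieces $[(B_{mnlp})_-,(L^k)_+]_-$ and $[(B_{mnlp})_+,(L^k)_-]_-$ combine with the two bare terms $(L^k)_-(B_{mnlp})_-\Phi$ and $(B_{mnlp})_-(L^k)_-\Phi$, while the purely negative contributions cancel against themselves, leaving zero.

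The main obstacle is the bookkeeping of super signs: depending on the parity of $l+p$ the operator $B_{mnlp}$ may be fermionic, and for $k=4i-1$ the power $L^k$ is also fermionic, so the brackets above are genuine graded super-commutators with the convention $[X,Y]=XY-(-1)^{|X||Y|}YX$. One must check that the projection identity $[A,B]_-=[A_+,B]_-+[A_-,B]_-$ and the $\pm$ decomposition of a super-commutator remain compatible with the graded sign $(-1)^{|L^k||B_{mnlp}|}$; this reduces to a short case analysis on the parities of $L^k$ and $B_{mnlp}$. Once the signs are in order, the cancellation pattern is completely analogous to the non-supersymmetric BKP argument of \cite{BKP-DS}.
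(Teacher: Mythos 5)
Your proposal is correct and follows essentially the same route as the paper: both reduce the claim to the Sato equations $D_k\Phi=-(L^k)_-\Phi$, $D_{mnlp}\Phi=-(B_{mnlp})_-\Phi$, invoke \eqref{Bflow} for the evolution of $B_{mnlp}$, and cancel the mixed $(\cdot)_\pm$ pieces of the projected super-commutators. The super signs you defer to a "short case analysis" are precisely the explicit factors $(-1)^{(l+p)k}$ appearing in the paper's displayed computation (the bracket $[D_{mnlp},D_k]$ is the graded commutator), and the only other difference is that you derive the Sato form of the additional flow from \eqref{addiflow} while the paper takes it as the definition of the action on $\Phi$.
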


\begin{prf}
The  proposition can be checked case by case with the help of
eq.\eqref{Bflow}.
For example,

\begin{align}\label{}
&\left[D_{mnlp}, D_k\right]\Phi \nn\\
&=D_{mnlp}
 D_k\Phi- (-1)^{(l+p)k} D_kD_{mnlp}\Phi\nn\\
&=(-1)^{(l+p)k} [ ( L^k)_-,({B}_{mnlp})_-]\Phi
+[({B}_{mnlp})_-,  L^k]_- \Phi +(-1)^{(l+p)k}[(L^k)_+,{B}_{mnlp}]_-\Phi\\
& =0.\nn
\end{align}
\end{prf}

This proposition tells us that the additional flows of the supersymmetric BKP hierarchy are in fact its symmetries whose algebraic structure can be shown in the following proposition.
\begin{prp}
The algebra of additional symmetries of the SBKP hierarchy given
by eq.\eqref{addiflow} is isomorphic to the Lie algebra  $SW_{1+\infty}$.
\end{prp}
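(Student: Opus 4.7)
The plan is to reduce the identification of the algebra of additional symmetries to a purely algebraic computation with the generators $B_{mklp}$, and then match the resulting structure constants with those of the (B-type) $SW_{1+\infty}$ algebra. The first task is to prove a standard flow-commutator identity: using Proposition~\ref{thm-st} together with the transformation rule $D_{mklp}\,B_{m'k'l'p'}=[-(B_{mklp})_-,\,B_{m'k'l'p'}]$ (which follows in exactly the same dressing fashion as Proposition with eq.\eqref{Bflow}, since $M_0,M_1,\Q,L$ all satisfy the same form of flow equation), one obtains, by the usual bracket-of-dressed-flows calculation,
\begin{equation*}
[D_{mklp},\,D_{m'k'l'p'}]\,\Phi \;=\; -\bigl(\,[B_{mklp},\,B_{m'k'l'p'}]\,\bigr)_{-}\,\Phi .
\end{equation*}
This reduces the whole question to computing the supercommutator of the generators $B_{mklp}$.

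Next I would carry out that computation. Writing $\tilde B_{mklp}:=M_0^kM_1^l\Q^pL^{2m}$, so that $B_{mklp}=\tilde B_{mklp}-(-1)^{pl+m+p+l}\,L^{2m-1}\Q^pM_1^lM_0^k L$, I would first compute the supercommutators of the $\tilde B$'s. The only nontrivial commutation relations needed are those in Lemma~\ref{thm-Mw}, namely $[\Q,M_0]=-M_1$, $[\Q,M_1]=1$, $[L^2,M_0]=1$ (which together with $|L|=1$ yields $[L,M_0]=\tfrac12 L^{-1}$), and the vanishing commutators $[L,M_1]=[L,\Q]=0$ obtained by dressing the corresponding identities between $D,\Gamma_0,\Gamma_1,Q$. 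Using the super-Leibniz rule for the supercommutator, every expansion term is again a monomial in $M_0,M_1,\Q,L$ of the same form, so $[\tilde B_{mklp},\tilde B_{m'k'l'p'}]$ is a finite integer combination of $\tilde B_{m''k''l''p''}$'s. This is precisely the defining presentation of a super version of $W_{1+\infty}$ realized on the $(M_0,M_1,\Q,L)$-symbols.

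Then I would restore the antisymmetric (B-type) combination. Proposition~\ref{asym} guarantees that each $B_{mklp}$ satisfies $B^*=-D\,B\,D^{-1}$, and one checks directly that this linear subspace is closed under supercommutator, so that $[B_{mklp},B_{m'k'l'p'}]$ is again a B-type operator and hence re-expresses as a linear combination of $B_{m''k''l''p''}$'s with structure constants obtained from the previous step by symmetrization in the sign $(-1)^{pl+m+p+l}$. Combined with the bracket-of-flows identity above, this exhibits the explicit Lie superalgebra map $D_{mklp}\longmapsto B_{mklp}$ and shows that the additional symmetries form exactly the (B-type) $SW_{1+\infty}$ algebra.

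The main obstacle I expect is the combinatorial bookkeeping inside the supercommutator expansion: the interplay between the $\Z_2$-grading signs $(-1)^{(l+p)(l'+p')}$, the sign $(-1)^{pl+m+p+l}$ built into $B_{mklp}$, and the ``anomalous'' terms produced by $[L,M_0]=\tfrac12 L^{-1}$ (which lowers the $L$-degree and creates terms that must be repackaged as lower $B$'s rather than objects outside the generating family). Verifying that these anomalous terms reorganize precisely into further $B_{m''k''l''p''}$'s, with signs matching the B-type symmetrization, is the delicate point; once that closure is confirmed, the identification with $SW_{1+\infty}$ is a matter of reading off the structure constants.
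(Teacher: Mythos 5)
Your overall strategy---reduce $[D_{mklp},D_{m'k'l'p'}]\Phi$ to $-\bigl([B_{mklp},B_{m'k'l'p'}]\bigr)_-\Phi$ and then verify that the span of the $B$'s closes under the supercommutator with $SW_{1+\infty}$ structure constants---is a legitimate and in fact more informative route than the one the paper takes. The paper's proof is essentially a two-line declaration: it exhibits the generator correspondence $z\mapsto L^2$, $\xi\mapsto \Q+M_1L^2$, $\partial_z\mapsto M_0$, $\partial_\xi\mapsto M_1$ (obtained by dressing $\partial$, $Q+\Gamma_1\partial$, $\Gamma_0$, $\Gamma_1$), remarks that $\xi$ commutes with $z$, and identifies the resulting algebra of superdifferential operators with $SW_{1+\infty}$, leaving the closure of the antisymmetrized generators and the matching of structure constants implicit. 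So your plan would, if carried out, supply exactly what the paper omits.

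However, the commutation relations on which your closure argument rests are partly wrong, and at the point where the argument is most delicate. The relations that survive dressing are $[L^2,M_0]=1$, $[L^2,M_1]=0$, $[L^2,\Q]=0$, $[\Q,M_0]=-M_1$, $[\Q,M_1]=1$: it is the \emph{even} operator $L^2$, not $L$, that plays the role of the spectral variable $z$. Your claimed relations for odd powers of $L$ fail: $[L,M_1]=\Phi[D,\Gamma_1]\Phi^{-1}=1$ (since $[D,\theta]=1$ while $D$ commutes with $t_{4k-1}\partial^{2k-1}$), not $0$; and $[L,M_0]=\Phi[D,\Gamma_0]\Phi^{-1}$, which by direct computation equals the dressing of $\theta-\sum_k(4k-1)t_{4k-1}D^{4k-2}$, is not $\tfrac12L^{-1}$---the identity $[L^2,M_0]=1$ simply does not determine $[L,M_0]$. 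These relations are load-bearing precisely in the second, ``reflected'' summand $L^{2m-1}\Q^pM_1^lM_0^kL$ of $B_{mklp}$, where odd powers of $L$ must be moved past $M_0$ and $M_1$; with the incorrect values your claim that ``every expansion term is again a monomial of the same form'' does not go through. The clean way to handle that summand is not via ad hoc commutators of $L$ with the $M$'s but via the anti-involution $\iota:A\mapsto -D^{-1}A^*D$: by the B-type condition $B_{mklp}^*=-DB_{mklp}D^{-1}$, the generator $B_{mklp}$ is the image of the monomial $\tilde B_{mklp}=M_0^kM_1^l\Q^pL^{2m}$ under $1-\iota$, and since $\iota$ is a Lie-superalgebra anti-automorphism the bracket of two such antisymmetrized monomials is again one, with structure constants inherited from the monomial algebra in $(M_0,M_1,\Q,L^2)$. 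With that repair your argument would close; as written it has a genuine gap.
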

\begin{proof}
The isomorphism is given by
\[z&\mapsto\partial, \ \ \ \ \ \xi&\mapsto  Q+\Gamma_1\partial ,\\
\d_z&\mapsto \Gamma_0, \ \ \ \ \  \d_{\xi}&\mapsto  \Gamma_1,
\]
which further lead to
\[z&\mapsto L^2, \ \ \ \ \ \xi&\mapsto  \Q+M_1L^2 ,\\
\d_z&\mapsto M_0, \ \ \ \ \  \d_{\xi}&\mapsto  M_1.
\]
One can find the above construction keeps $\xi$ commuting with $z$.
\end{proof}

\section{Ghost symmetry of supersymmetric BKP hierarchy}
In this section, we will give another special symmetry which does not contain time variables explicitly. Before that, we firstly need to define the super-Baker-Akhiezer function (super-BA) and adjoint super-BA function as

\[\Phi_{BA}=\Phi e^{\xi},\ \ \ \Phi_{BA}^*=\Phi^{*-1} e^{-\xi},\]
where
\[\xi(\lambda,\eta,\theta,t)=\sum_{k=1}^{\infty}\lambda^{4k-2}t_{4k-2}+\eta\theta+(\eta-\lambda\theta)\sum_{k=1}^{\infty}\lambda^{4k-2}t_{4k-1},\ t_2\equiv x.\]
The following property can  be found
\[D_{4i-2}e^{\xi}=\d^{2i-1}e^{\xi},\ \ \ D_{4i-1}e^{\xi}=D^{2i-1}e^{\xi}.\]
Then we can prove that

\[L^2\Phi_{BA}=\lambda\Phi_{BA},\ \ \ L^{2*}\Phi_{BA}^*=-\lambda\Phi_{BA}^*,\]
\[D_{4i-2}\Phi_{BA}=(L^{4i-2})_+\Phi_{BA},\ \ \ D_{4i-2}\Phi_{BA}^*=-(L^{4i-2})_+^*\Phi_{BA}^*,\]
\[D_{4i-1}\Phi_{BA}=(L^{4i-1})_+\Phi_{BA},\ \ \ D_{4i-1}\Phi_{BA}^*=-(L^{4i-1})_+^*\Phi_{BA}^*.\]

The B type condition implies that the adjoint super-BA function $\Psi_{BA}$ can be in fact the supersymmetric derivative of its corresponding super-BA function $\Phi_{BA}$,
i.e.
\[\Psi_{BA}(t,\lambda)=-\lambda^{-1}\Phi_{BA}^{[1]}(t,-\lambda).\]
Then we can also define super-eigenfunctions of the supersymmetric BKP hierarchy  as
\[D_{4i-2}\phi=(L^{4i-2})_+\phi,\ \ \ D_{4i-2}\psi=-(L^{4i-2})_+^*\psi,\]
\[D_{4i-1}\phi=(L^{4i-1})_+\phi,\ \ \ D_{4i-1}\psi=-(L^{4i-1})_+^*\psi.\]

The super-eigenfunctions have the following spectral representation in term of integrals of Baker-Akhiezer functions as
\[\notag\phi(t,\theta)=\int d\lambda d\eta \phi(\lambda,\eta)\Phi_{BA}(t,\theta;\lambda,\eta),\ \ \psi(t,\theta)=\int d\lambda d\eta \bar \psi(\lambda,\eta)\Phi_{BA}^*(t,\theta;\lambda,\eta).\]

The B type condition also implies that the adjoint eigenfunction $\psi$ can be chosen as the supersymmetric derivative of its corresponding eigenfunction $\phi$,
i.e.
\[\psi=\phi^{[1]}.\]

The supersymmetric tau function of the supersymmetric BKP hierarchy can be defined by the residue (the coefficient before $D^{-1}$) of  supersymmetric Lax operators as
\[D_{4k-2}D\ln \tau=\res L^{4k-2},\ \ D_{4k-1}D\ln \tau=\res L^{4k-1}.\]

Define two eigenfunctions $\phi_1,\phi_2$ and the following operator
 \[B_g=\phi_1 D^{-1}\phi_2^{[1]}-(-1)^{| \phi_2|}\phi_2 D^{-1}\phi_1^{[1]},\ \ | \phi_1|\equiv | \phi_2|,\]
 which is used to generate the ghost flows.
According to
\[D^{-1}\phi=(-1)^{|\phi|} [\phi D^{-1}-D^{-1}\phi^{[1]}D^{-1}],\]
one can find the operator $B_g$ satisfies the B type condition,
i.e.
\[B_g^*=-DB_gD^{-1}.\]
Then the ghost flow of the supersymmetric BKP hierarchy can be defined as following
\[\label{ghostflow}
D_Z L=[B_g,L]=[\phi_1 D^{-1}\phi_2^{[1]}-(-1)^{| \phi_2|}\phi_2 D^{-1}\phi_1^{[1]}, L],\]
where functions $\phi_1,\phi_2$  are the eigenfunction and adjoint eigenfunction  of  the supersymmetric BKP hierarchy.
The following proposition will tell you the above flow is a symmetry of the supersymmetric BKP hierarchy.
\begin{prp}\label{symmetry}
The additional flow $D_Z$ commutes
with the supersymmetric BKP flows $D_{ n}$, i.e.,
\begin{eqnarray}
[D_Z, D_{n}]L=0, \ \ n=4i-2,4i-1,\ i \in\Z_+.
\end{eqnarray}
\end{prp}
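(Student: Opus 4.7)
My plan is to reduce the commutativity $[D_Z, D_n]L = 0$ to a single operator identity for $B_g$, and then verify that identity using the $B$-type condition. Since $|\phi_1|\equiv|\phi_2|$ and $|D^{-1}|=1$, the ghost operator $B_g$ has even parity, so $D_Z$ is an even flow and $[D_Z, D_n] = D_Z D_n - D_n D_Z$. Expanding $D_n L$ by the Lax equation and $D_Z L = [B_g, L]$ with the super Leibniz rule for derivations acting on super-commutators produces several bracket terms; combining them with the super Jacobi identity
\[
[(L^n)_+, [B_g, L]] - [B_g, [(L^n)_+, L]] = [[(L^n)_+, B_g], L]
\]
and with the consequence $D_Z L^n = [B_g, L^n]$ of the Leibniz rule, one obtains
\[
[D_Z, D_n]L = \bigl[\, [B_g, L^n]_+ + [(L^n)_+, B_g] - D_n B_g,\ L \,\bigr].
\]
For odd flows $n=4k-1$ the extra $-2L^{4k}$ term in the Lax equation generates contributions $-2[B_g, L^{4k}]$ in both orderings $D_Z D_n L$ and $D_n D_Z L$, and these cancel exactly.

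The bracketed expression simplifies further: since $B_g$ has pseudodifferential order $\leq -1$, the product $B_g\cdot(L^n)_-$ has order $\leq -2$, hence $[B_g, (L^n)_-]_+ = 0$. Splitting $L^n = (L^n)_+ + (L^n)_-$ then gives $[B_g, L^n]_+ + [(L^n)_+, B_g] = [(L^n)_+, B_g]_-$, and the proposition reduces to the key identity
\[
D_n B_g = [(L^n)_+, B_g]_-.
\]

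To verify this identity I would compute both sides on each summand $\phi\, D^{-1}\psi$ of $B_g$, where $\phi$ is one of the $\phi_i$ and $\psi$ is the supersymmetric derivative $\phi_j^{[1]}$ of the other. On the left-hand side, the super Leibniz rule together with the eigenfunction equation $D_n \phi_i = (L^n)_+ \phi_i$ controls the evolution of $\phi_i$, while the $B$-type condition on $L$ forces $\phi_i^{[1]}$ to be an adjoint eigenfunction of $L$, yielding $D_n \phi_i^{[1]} = -((L^n)_+)^* \phi_i^{[1]}$ (the overall sign being fixed by the parity-dependent commutation of $D_n$ with $D$ recorded in Section 2). On the right-hand side I would use the super analog of the classical identity
\[
[(L^n)_+,\ \phi\, D^{-1}\psi]_- = ((L^n)_+ \phi)\, D^{-1}\psi \ \pm\ \phi\, D^{-1}\bigl(((L^n)_+)^* \psi\bigr),
\]
obtained by writing $(L^n)_+\cdot\phi$ and $\psi\cdot(L^n)_+$ with the super Leibniz rule and extracting the negative-order part. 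The two sides then match term by term, establishing the key identity.

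The principal obstacle is sign bookkeeping: tracking $(-1)^{|\phi|}$ factors in the super Leibniz rule, the super adjoint identity $(PQ)^* = (-1)^{|P||Q|}Q^*P^*$, and the parity-dependent (anti)commutation of $D_n$ with $D$. The real work is done by the $B$-type condition, which converts a statement about the eigenfunctions $\phi_i$ alone into a closed identity involving both $\phi_i$ and $\phi_i^{[1]}$; without it, the right-hand side of the key identity could not be re-expressed in terms of the ghost operator $B_g$.
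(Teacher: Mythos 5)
Your proposal is correct and follows essentially the same route as the paper: the reduction to the single operator identity $D_n B_g = [(L^n)_+, B_g]_-$ is exactly the paper's zero-curvature equation $D_Z(L^n)_+ - D_n B_g + [(L^n)_+, B_g] = 0$, and both verify it on each summand $\phi\,D^{-1}\psi$ of $B_g$ using the eigenfunction equations, the $B$-type fact that $\phi_i^{[1]}$ is an adjoint eigenfunction, and the projection identity for $[(L^n)_+,\phi\,D^{-1}\psi]_-$. The only cosmetic difference is that you make the cancellation of the $-2L^{4k}$ term for odd flows explicit, which the paper leaves implicit.
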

\begin{proof}
Note that the derivatives $\phi_1^{[1]},\phi_2^{[1]}$ are in fact adjoint supersymmetric eigenfunctions.
The commutativity between ghost flows and supersymmetric BKP flows is in fact equivalent to the following Zero-Curvature equation which includes the following detailed proof
\begin{eqnarray*}&&D_ZB_n-D_{ n}( B_g)+[B_n, B_g]\\
&=&[B_g, L^n]_+- \phi_{1t_n} D^{-1}\phi_2^{[1]}- \phi_1 D^{-1}\psi^{[1]}_{2t_n}+(-1)^{| \phi_2|} [\phi_{2t_n} D^{-1}\phi_1^{[1]}+\phi_2 D^{-1}\phi^{[1]}_{1t_n}]\\
&&+[B_n, \phi_1 D^{-1}\phi_2^{[1]}-(-1)^{| \phi_2|}\phi_2 D^{-1}\phi_1^{[1]}]\\
&=&( B_n \phi_1 D^{-1}\phi_2^{[1]})_--(\phi_1 D^{-1}\phi_2^{[1]} B_n)_--P_0(B_n\phi_1) D^{-1}\phi_2^{[1]}+\phi_1 D^{-1}P_0(B^*_n\phi_2^{[1]})\\
&&-(-1)^{| \phi_2|}[( B_n\phi_2 D^{-1}\phi_1^{[1]} )_--(\phi_2 D^{-1}\phi_1^{[1]} B_n)_--P_0(B_n\phi_2) D^{-1}\phi_1^{[1]}+\phi_2 D^{-1}P_0(B^*_n\phi_1^{[1]})]\\
&=&0.\end{eqnarray*}
In the above proof the $P_0(A)$ means the coefficient over the term $D^0$ of the operator $A$.
\end{proof}

\section{The supersymmetric two-component BKP hierarchy}

Let us firstly define the supersymmetric two-component BKP hierarchy by two Lax operators.
Now we introduce the even and odd time variables $(t_2,t_3,t_6,\cdot;\hat t_2,\hat t_3,\hat t_6,\cdot)$ and the following definition of even and odd flows
\eqa
D_{4i-2}=\frac{\d}{\d t_{4i-2}},\ \ D_{4i-1}=\frac{\d}{\d t_{4i-1}}+\sum_{j=1}^{\infty}t_{4j-1}\frac{\d}{\d t_{4i+4j-2}},
\eeqa
\eqa
\hat D_{4i-2}=\frac{\d}{\d \hat t_{4i-2}},\ \ \hat D_{4i-1}=\frac{\d}{\d\hat  t_{4i-1}}+\sum_{j=1}^{\infty}\hat t_{4j-1}\frac{\d}{\d \hat t_{4i+4j-2}}.
\eeqa

This two families of
odd and even flows satisfy a nonabelian Lie superalgebra whose
commutation relations are
$$
[D_{4i-2},D_{4j-2}]=0~,\quad [D_{4i-2},D_{4j-1}]=0~,\quad
[D_{4i-1},D_{4j-1}]=-2 D_{4i+4j-2}~,
$$
\eqa[D_{4i-2},D]=0~,\quad [D_{4i-1},D]=0,\eeqa
$$
[\hat D_{4i-2},\hat D_{4j-2}]=0~,\quad [\hat D_{4i-2},\hat D_{4j-1}]=0~,\quad
[\hat D_{4i-1},\hat D_{4j-1}]=-2 \hat D_{4i+4j-2}~,
$$
\eqa[\hat D_{4i-2},D]=0~,\quad [\hat D_{4i-1},D]=0,\ \ [\hat D_{m},D_n]=0.\eeqa

The two Lax operators of the supersymmetric two-component BKP hierarchy will be defined in  forms as
\begin{equation} \label{PhP}
L= D+\sum_{i\ge1}u_i  D^{1-i}, \quad \hat{L}=
D^{-1}\hat{u}_{0}+\sum_{i\ge1}\hat{u}_i D^{i-1},\ | u_i|=i,|{\hat {u_i}}|=i+1,
\end{equation}
 such that
 \eqa \label{Bcondition}L^*=- D L D^{-1},\ \  \ \hat{L}^*=- D\hat{L}
D^{-1}.\eeqa
We call eqs.\eqref{Bcondition} the B type condition of the supersymmetric two-component BKP hierarchy.
The supersymmetric two-component BKP hierarchy is defined by the following
Lax equations:
\begin{align}\label{2bkpLax}
& D_{i}L=-[(L^{i})_-,  L], \quad  \quad \hat D_{i} \hat {L}=[(\hat{L}^{i})_+, \hat {L}],
\end{align}
\begin{align}\label{2bkpLax2}
& D_{i}\hat L=[(L^{i})_+, \hat L], \quad  \quad \hat D_{i} L=-[(\hat{L}^{i})_-,  {L}],\ \ i=4k-1, 4k-2,\ k\in\Z_+,
\end{align}
which is equivalent to the following equations
\begin{align}\label{2bkpLax3}
& D_{4k-2}L=[(L^{4k-2})_+,  L], \quad  \quad \hat D_{4k-2}\hat {L}=-[(\hat{L}^{4k-2})_-, \hat {L}],\\
& D_{4k-1}L=[(L^{4k-1})_+,  L]-2L^{4k}, \quad  \quad \hat D_{4k-1}\hat {L}=-[(\hat{L}^{4k-1})_-, \hat {L}]+2\hat L^{4k},\\
& D_{4k-2} \hat L=[(L^{4k-2})_+, \hat L], \quad  \quad \hat D_{4k-2}L=-[(\hat{L}^{4k-2})_-,  {L}],\\
& D_{4k-1} \hat L=[(L^{4k-1})_+, \hat L], \quad  \quad \hat D_{4k-1}L=-[(\hat{L}^{4k-1})_-,  {L}],
\end{align}
with $ k\in\Z_+$.

One can write the operators $L$ and $\hat{L}$ in a dressing form as
\begin{equation} \label{PPh}
L=\Phi D\Phi^{-1},\quad \hat{L}=\hat{\Phi} D^{-1}\hat{\Phi}^{-1},
\end{equation}
where
\begin{align} \label{dreop}
\Phi=1+\sum_{i\ge 1}a_i D^{-i},\quad \hat{\Phi}=1+\sum_{i\ge 1}b_i
D^{i},
\end{align}
 satisfy
\begin{equation}\label{phipsi2}
\Phi^*= D\Phi^{-1} D^{-1},\quad \hat{\Phi}^*= D\hat{\Phi}^{-1}
D^{-1}.
\end{equation}
Given $L$ and $\hat{L}$, the dressing operators $\Phi$ and
$\hat{\Phi}$ are determined uniquely up to a multiplication to the
right by operators with
constant coefficients. The dressing operators $\Phi$ and
$\hat{\Phi}$ take values in two separated B type Volterra groups. The supersymmetric two-component BKP hierarchy
can also be redefined as
\begin{align}
& D_{4k-2} \Phi=- (L^{4k-2})_-\Phi, \quad
\  D_{4k-2}\hat{\Phi}=(L^{4k-2})_+\hat{\Phi}, \label{ppt1}\\
&\hat D_{4k-2} \Phi=- (\hat{L}^{4k-2})_-\Phi, \quad \hat D_{4k-2}
\hat{\Phi}=(\hat{L}^{4k-2})_+\hat{\Phi}, \label{ppt2}
\end{align}
\begin{align}
&\hat D_{4k-1} \Phi=- (L^{4k-1})_-\Phi, \quad
 D_{4k-1}\hat{\Phi}=(L^{4k-1})_+ \hat{\Phi}, \label{ppt1}\\
&\hat D_{4k-1} \Phi=- (\hat{L}^{4k-1})_-\Phi, \quad \hat D_{4k-1}
\hat{\Phi}=(\hat{L}^{4k-1})_+\hat{\Phi}, \label{ppt2}
\end{align}
with $k\in\Z_+$.

Denote $t=(t_2,t_3,t_6,t_7,\dots)$,
$\hat{t}=(\hat{t}_2,\hat{t}_3,\hat{t}_6,\hat{t}_7,\dots)$ and introduce
two wave functions
\begin{align}\label{wavef}
w(z)=w(t, \hat{t}; z)=\Phi e^{\xi(t;z)},
\quad \hat{w}(z)=\hat{w}(t, \hat{t}; z)=\hat{\Phi}
e^{x z+\hat \xi(\hat{t};-z^{-1})},
\end{align}
where the functions $\xi,\hat \xi$ are defined as $\xi(t;
z)=\sum_{k\in\Z_+} t_{4k-2} z^{4k-2}+t_{4k-1} z^{4k-1},\ \hat \xi(\hat t;
z)=\sum_{k\in\Z_+} \hat t_{4k-2} \hat z^{4k-2}+\hat t_{4k-1} \hat z^{4k-1}$. It is easy to see
$D^i e^{x z}=z^i e^{x z},\ \ i\in\Z$
and
\[
L\,w(z)=z w(z), \quad \hat{L} \hat{w}(z) = z^{-1} \hat{w}(z).
\]

With the above preparation, it is time to  construct additional symmetries for the supersymmetric two-component BKP hierarchy in the next section.

\section{Additional symmetries of the supersymmetric two-component BKP hierarchy}

In this section, we are to construct additional symmetries for the supersymmetric two-component BKP hierarchy by using the Orlov--Schulman operators whose coefficients
depend explicitly on the time variables of the
hierarchy.

With the same dressing operators given in the eq.\eqref{dreop},
Orlov--Schulman operators $M_i,\hat M_i, \Q,\hat \Q$ are constructed in the following
dressing structure
\begin{equation*}\label{}
M_i=\Phi\Gm_i\Phi^{-1}, \quad
\hat{M}_i=\hat{\Phi}\hat{\Gm}_i\hat{\Phi}^{-1},\ \ i=0,1;\ \  \Q=\Phi Q\Phi^{-1},\hat \Q=\hat \Phi Q\hat \Phi^{-1},
\end{equation*}
where
\eqa\notag
\Gm_0 &=&x+\frac12\sum_{k\geq 1}(4k-2)t_{4k-2} D^{4k-4}+\frac12(4k-1)t_{4k-1} D^{4k-3}\\
&&-\frac12\sum_{k\geq 1}t_{4k-1} \d^{2k-2}Q+\sum_{i,j\geq 1}(i-j)t_{4i-1}t_{4j-1} \d^{2i+2j-2},\\ \notag
 \hat{\Gm}_0&=& x+\frac12\sum_{k\geq 1}(4k-2)\hat t_{4k-2}D^{-4k}+\frac12(4k-1)\hat t_{4k-1} D^{-1-4k}\\
&&-\frac12\sum_{k\geq 1}\hat t_{4k-1} \d^{-1-2k}Q+\sum_{i,j\geq 1}(i-j)\hat t_{4i-1}\hat t_{4j-1}\d^{-2i-2j},
\eeqa
\eqa
\Gm_1 &=&\theta+\sum_{k\in\Z_+}t_{4k-1} \d^{2k-1},\\
 \hat{\Gm}_1&=&\theta+\sum_{k\in\Z_+}\hat t_{4k-1} \d^{-2k},
\eeqa
where $Q=\d_{\theta}-\theta\d.$

Then one can derive the following lemma similarly as the case of the single-component supersymmetric BKP hierarchy.

\begin{lem}
The operators $\Gamma_i$ and $\hat{\Gamma_i}$ satisfy
\[
[D_{4i-2}-D^{4i-2}, \Gamma_i]&=&[D_{4i-1}-D^{4i-1}, \Gamma_i]=0,\]
\[
[\hat D_{4i-2}+D^{-4i+2}, \hat \Gamma_i]&=&[\hat D_{4i-1}+D^{-4i+1}, \hat \Gamma_i]=0;\]
\[
[D_{4i-2}, \hat \Gamma_i]&=&[D_{4i-1},\hat  \Gamma_i]=[\hat D_{4i-2},  \Gamma_i]=[\hat D_{4i-1},  \Gamma_i]=0;\]
where $i=0,1.$
\end{lem}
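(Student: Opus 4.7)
The plan is to split the six identities into three groups according to which sector of time variables they involve, and to reduce each group either to Lemma 3.1 or to a trivial independence argument.

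First, the unhatted identities $[D_{4k-2}-D^{4k-2},\Gamma_j]=0$ and $[D_{4k-1}-D^{4k-1},\Gamma_j]=0$ for $j=0,1$ are literally the content of Lemma 3.1, because the operators $\Gamma_0,\Gamma_1$ and the flows $D_{4k-2},D_{4k-1}$ are defined on the unhatted $t$-sector in exactly the same way as in the one-component SBKP setting. I would simply invoke that lemma rather than repeat the computation.

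Second, the hatted identities $[\hat D_{4k-2}+D^{-4k+2},\hat\Gamma_j]=0$ and $[\hat D_{4k-1}+D^{-4k+1},\hat\Gamma_j]=0$ would be established by replaying the proof of Lemma 3.1 with the positive powers $\partial^{m}$ replaced by the negative powers $\partial^{-m}$ that appear in $\hat\Gamma_0,\hat\Gamma_1$. The sign flip from $D_k-D^k$ to $\hat D_k+D^{-k}$ is forced by the dressing $\hat L=\hat\Phi D^{-1}\hat\Phi^{-1}$ together with the flow conventions for $\hat\Phi$ in the previous section. The key commutators to recompute are $[D^{-4k+2},x]$ and $[D^{-4k+1},x]$; they mirror $[D^{4k-2},x]$ and $[D^{4k-1},x]$ up to shifts in the exponent and signs coming from $(D^{-1})^{*}=-D^{-1}$. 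Once those are in hand, the cancellation of the $Q$-terms and of the bilinear $\hat t_{4i-1}\hat t_{4j-1}$-terms follows exactly the same pattern as in Lemma 3.1.

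Third, the cross commutators $[D_{k},\hat\Gamma_j]=0$ and $[\hat D_{k},\Gamma_j]=0$ are immediate from the decoupling of the two time sectors: $\hat\Gamma_j$ depends only on $x$, $\theta$, powers of $\partial^{-1}$ and the $\hat t$-variables, while $D_{k}$ differentiates only with respect to $t$-variables and involves only positive powers of $D$ that already commute with $x,\theta$ up to the terms producing the identities of the first group. The extra shift summand $\sum_{j}t_{4j-1}\,\partial/\partial t_{4i+4j-2}$ in $D_{4i-1}$ annihilates any $\hat t$-dependent quantity, so the bracket collapses term-by-term. The reverse statement follows by the same argument with $t\leftrightarrow\hat t$.

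The main obstacle will be the sign- and shift-bookkeeping in the second group, where one has to verify that the negative-power analogues of $[D^{4k-2},x]$ and $[D^{4k-1},x]$ produce exactly the corrections needed to cancel the explicit time-dependent terms in $\hat\Gamma_0$, and to explain the switch from a minus sign in $D_k-D^k$ to a plus sign in $\hat D_k+D^{-k}$. Once those signs are nailed down, the proof is a clean adaptation of Lemma 3.1 rather than a genuinely new computation.
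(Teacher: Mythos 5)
Your proposal follows the paper's own proof essentially verbatim: the paper likewise disposes of the unhatted identities by reference to the one-component lemma, proves the hatted identities by redoing that computation with negative powers of $D$ (the key steps being exactly the commutators $[D^{2-4i},x]$ and $[D^{1-4i},x]$, the latter giving $-\tfrac12(4i-1)D^{-1-4i}+\tfrac12\partial^{-1-2i}Q$), and treats the cross commutators as immediate from the decoupling of the $t$- and $\hat t$-sectors. Your plan is correct and takes the same route; no further comparison is needed.
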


\begin{proof}

For the proof, we do the following calculation

\[
[\hat D_{4i-2}+D^{2-4i}, \hat\Gamma_0]&=&\frac12(4k-2) D^{-4k}+(1-2i)\d^{-2i}=0,
\]

\[&&
[\hat D_{4i-1}+D^{1-4i},\hat  \Gamma_0]\\
&=&[\frac{\d}{\d \hat t_{4i-1}}-\sum_{j=1}^{\infty}\hat t_{4j-1}\frac{\d}{\d \hat t_{4i+4j-2}}+D^{1-4i}, \hat \Gamma_0]\\
&=&\frac12(4i-1)D^{-1-4k}-\frac12\d^{-1-2i}Q+2\sum_{j\in\Z}(i-j)\hat t_{4j-1} \d^{-2i-2j}\\ \notag
&&-\sum_{j=1}^{\infty}(2i+2j-1)\hat t_{4j-1}D^{-4i-4j}+[D^{1-4i},x+\frac12\sum_{k\in\Z_+}(4k-1)\hat t_{4k-1} D^{-1-4k}].
\]
Because
\[\notag[D^{1-4i},x]=[\d^{-2i} D,x]=-2iD^{-4i-1}+\d^{-2i-1}\theta\d=-\frac12(4i-1)D^{-1-4i}+\frac12\d^{-1-2i}Q,\]

\[
[D^{1-4i},\frac12\sum_{k\in\Z_+}(4k-1)\hat t_{4k-1} D^{-1-4k}]=\sum_{j\in\Z}(4j-1)\hat t_{4j-1} D^{-4i-4j},
\]
then
\[
[\hat D_{4i-1}+D^{4i-1},\hat  \Gamma_0]&=&0.
\]

For $\hat  \Gamma_1,$ we get
\[\notag
[\hat D_{4i-1}+D^{1-4i},\hat  \Gamma_1]&=&[\frac{\d}{\d \hat t_{4i-1}}-\sum_{j=1}^{\infty}\hat t_{4j-1}\frac{\d}{\d\hat  t_{4i+4j-2}}+D^{1-4i}, \theta+\sum_{k\in\Z_+}\hat t_{4k-1}\d^{-2k}]\\
&=&\d^{-2i}-D^{-4i}=0.
\]
Further one can derive
\[
[\hat D_{k}, \Gamma_i]&=&
[D_{k}, \hat \Gamma_i]=0.
\]
For $ Q$, we can get

\[
[D_{k}, Q]&=&
[\hat D_{k}, Q]=0,
\]

\[
[Q, \hat \Gamma_0]&=&[Q,x]-[Q, \frac12\sum_{k\in\Z_+}\hat t_{4k-1} \d^{-1-2k}Q]\\
&=&-\theta-\sum_{k\in\Z_+}\hat t_{4k-1} \d^{-2k}=-\hat \Gamma_1.
\]
\end{proof}

Then it is easy to get  the following lemma using the above lemma and dressing structures.
\begin{lem}\label{thm-Mw}
The operators $M_j,\Q,L$,$\hat M_j,\hat \Q,\hat L$ satisfy
\[
[\Q, M_0]&=&-M_1,\ \ \ [\Q, M_1]=1,\ \ \ [L^2, M_0]=1,
\]
\begin{equation}\label{bkpMt}
D_k M_j=[(L^k)_+,M_j],\ \ \ D_k \Q=[(L^k)_+,\Q],\quad\ k=4i-2,4i-1,\ i \in\Z_+,
\end{equation}

\begin{equation}
\hat D_k  M_j=[-(\hat L^k)_-, M_j],\ \ \ \hat D_k  \Q=[-(\hat L^k)_-,\Q],\quad\ k=4i-2,4i-1,\ i \in\Z_+,
\end{equation}

\[
[\hat \Q, \hat M_0]&=&-\hat M_1,\ \ \ [\hat \Q,\hat  M_1]=1,\ \ \ [\hat L^{-2},\hat  M_0]=1,
\]

\begin{equation}
D_k\hat  M_j=[(L^k)_+,\hat M_j],\ \ \ D_k \hat \Q=[(L^k)_+,\hat \Q],\quad\ k=4i-2,4i-1,\ i \in\Z_+,
\end{equation}
\begin{equation}\label{bkpMt}
\hat D_k\hat  M_j=[-(\hat L^k)_-,\hat M_j],\ \ \ \hat D_k \hat \Q=[-(\hat L^k)_-,\hat \Q],\quad\ k=4i-2,4i-1,\ i \in\Z_+.
\end{equation}
\end{lem}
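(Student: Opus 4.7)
The overall plan is to mimic the proof of Lemma~3.2 from the single-component case and lift every identity in the preceding lemma to the ``dressed'' picture by conjugating with the appropriate Volterra dressing operator ($\Phi$ for unhatted objects, $\hat\Phi$ for hatted objects), while using the Sato equations \eqref{ppt1}--\eqref{ppt2} to convert bare derivations $D_k,\hat D_k$ acting on $\Phi^{\pm 1},\hat\Phi^{\pm 1}$ into the negative/positive projections of $L^k$ and $\hat L^k$. Since the preceding lemma already tells us which of $\Gamma_j,\hat\Gamma_j,Q$ are annihilated by each of $D_{k}-D^{k}$, $\hat D_{k}+D^{-k}$, $D_k$ or $\hat D_k$, and which commutation brackets $[Q,\Gamma_0]=-\Gamma_1$, $[Q,\Gamma_1]=1$, $[\partial,\Gamma_0]=1$ hold (and analogously in the hatted sector), the proof splits naturally into purely algebraic identities and flow identities.

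First, I would handle the purely algebraic relations $[\Q,M_0]=-M_1$, $[\Q,M_1]=1$, $[L^2,M_0]=1$ and their hatted analogues $[\hat\Q,\hat M_0]=-\hat M_1$, $[\hat\Q,\hat M_1]=1$, $[\hat L^{-2},\hat M_0]=1$. These follow in one line: conjugate the identities $[Q,\Gamma_0]=-\Gamma_1$, $[Q,\Gamma_1]=1$, $[\partial,\Gamma_0]=1$ (and their hatted versions in the preceding lemma) by $\Phi$ or $\hat\Phi$, noting that $L^2=\Phi\partial\Phi^{-1}$ and $\hat L^{-2}=\hat\Phi\partial\hat\Phi^{-1}$ by the dressing formulas \eqref{PPh}.

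Next, for the flow identities I proceed case by case. For $D_k M_j=[(L^k)_+,M_j]$ and $D_k\Q=[(L^k)_+,\Q]$, the argument is verbatim the one given for Lemma~3.2: start from $\Phi[D_k-D^k,\Gamma_j]\Phi^{-1}=0$ (resp.\ with $Q$), expand $D_k\Phi^{\pm 1}$ using \eqref{ppt1} (for both even and odd $k$, remembering the compensating $t_{4j-1}\partial/\partial t_{4i+4j-2}$ shift in $D_{4k-1}$ which is absorbed in the definition of $\Gamma_0$), and collect terms to conclude that $D_k M_j-[(L^k)_+,M_j]=0$. For $\hat D_k\hat M_j=[-(\hat L^k)_-,\hat M_j]$ and $\hat D_k\hat\Q=[-(\hat L^k)_-,\hat\Q]$, the identical argument applies to the hatted sector by conjugating $[\hat D_k+D^{-k},\hat\Gamma_j]=0$ by $\hat\Phi$ and invoking \eqref{ppt2}.

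The main obstacle, and the only really new ingredient compared with Lemma~3.2, is the \emph{mixed} flow equations $\hat D_k M_j=[-(\hat L^k)_-,M_j]$ and $D_k\hat M_j=[(L^k)_+,\hat M_j]$ (and their $\Q,\hat\Q$ variants). For these I would exploit the fact, recorded in the preceding lemma, that $\Gamma_j,Q$ are $\hat D_k$-free coefficients while $\hat\Gamma_j,Q$ are $D_k$-free coefficients; hence $[\hat D_k,\Gamma_j]=0$ and $[D_k,\hat\Gamma_j]=0$ as operator identities (no $D^{\pm k}$ needs to be subtracted). Differentiating $M_j=\Phi\Gamma_j\Phi^{-1}$ along $\hat D_k$ and using $\hat D_k\Phi=-(\hat L^k)_-\Phi$ (so that $\hat D_k\Phi^{-1}=\Phi^{-1}(\hat L^k)_-$) then yields
\begin{equation*}
\hat D_k M_j=[(\hat D_k\Phi)\Phi^{-1},M_j]=[-(\hat L^k)_-,M_j],
\end{equation*}
and the same manipulation with $D_k\hat\Phi=(L^k)_+\hat\Phi$ handles the remaining mixed identities. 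The delicate bookkeeping item to watch is the sign convention of the supercommutator and the correct Sato equation (negative projection with a minus sign versus positive projection with a plus sign) in each of the four mixed cases; once those are tracked, all twelve flow equations in the statement follow in parallel.
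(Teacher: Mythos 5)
Your proposal is correct and takes essentially the same approach as the paper: the paper gives no explicit proof of this lemma, asserting only that it follows ``using the above lemma and dressing structures,'' which is exactly the conjugation-by-$\Phi$, $\hat\Phi$ plus Sato-equation argument you spell out, and which the paper demonstrates in detail only for the single-component analogue in Section~3. Your explicit treatment of the mixed flows via $[\hat D_k,\Gamma_j]=0$, $[D_k,\hat\Gamma_j]=0$ is the natural filling-in of the detail the paper omits.
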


From now on, we will introduce  the following two operators $B_{mklp}$ and $\hat{B}_{mklp}$,
given any pair of integers $(m,k,l,p)$ with $m,k\ge0,l,p=0,1,$ as
\begin{align}\label{defBoperator}
B_{mklp}=M_0^kM_1^l\Q^pL^{2m}-(-1)^{pl+m+p+l}L^{2m-1}(\Q^p)M_1^{l}M_0^{k}L, \\
\ \ \hat{B}_{mklp}=\hat M_0^k\hat M_1^l\hat \Q^p\hat L^{-2m}-(-1)^{pl+m+p+l}\hat L^{-2m+1}(\hat \Q^p)\hat M_1^{l}\hat M_0^{k}\hat L^{-1}.
\end{align}
As a corollary, the following proposition can be got.
\begin{prp}
For any $\bar B_{mklp}=B_{mklp}, \hat{B}_{mklp}$, one has
\begin{align}\label{Bflow2}
& D_n \bar B_{mklp}=[(L^n)_+, \bar B_{mklp}], \quad \hat D_n\bar B_{mklp}=[-(\hat L^n)_-,\bar B_{mklp}],\ \ n=4i-2,4i-1,\ i \in\Z_+.
\end{align}
\end{prp}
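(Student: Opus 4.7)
The plan is to lift each claimed flow equation to the undressed level and then conjugate by the wave operators $\Phi, \hat\Phi$, in close analogy with the argument for the corresponding flow equations of the single-component SBKP hierarchy. Define
\begin{align*}
X_{mklp} &= \Gamma_0^k \Gamma_1^l Q^p D^{2m} - (-1)^{pl+m+p+l} D^{2m-1} Q^p \Gamma_1^l \Gamma_0^k D,\\
\hat X_{mklp} &= \hat\Gamma_0^k \hat\Gamma_1^l Q^p D^{-2m} - (-1)^{pl+m+p+l} D^{-2m+1} Q^p \hat\Gamma_1^l \hat\Gamma_0^k D^{-1}.
\end{align*}
By the dressing definitions of $M_i, \Q, L$ and $\hat M_i, \hat\Q, \hat L$, one has $B_{mklp} = \Phi X_{mklp} \Phi^{-1}$ and $\hat B_{mklp} = \hat\Phi\,\hat X_{mklp}\,\hat\Phi^{-1}$.

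Starting from the elementary commutation relations in the preceding lemma and applying the graded Leibniz rule factor-by-factor, I would first verify the four undressed identities
\begin{align*}
&[D_n - D^n,\, X_{mklp}] = 0, \qquad [\hat D_n,\, X_{mklp}] = 0,\\
&[D_n,\, \hat X_{mklp}] = 0, \qquad [\hat D_n + D^{-n},\, \hat X_{mklp}] = 0,
\end{align*}
for $n = 4i-2, 4i-1$, $i \in \Z_+$. The cross identities hold trivially because $X_{mklp}$ does not involve $\hat t$ and $\hat X_{mklp}$ does not involve $t$, while the remaining two follow from the preceding lemma applied factor-by-factor to the building blocks $\Gamma_i, \hat\Gamma_i, Q, D^{\pm 1}$. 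I then dress each identity by the appropriate wave operator. Using the Sato equations \eqref{ppt1}--\eqref{ppt2}, differentiating $B_{mklp} = \Phi X_{mklp}\Phi^{-1}$ as a super-derivation and recognizing $\Phi D^n \Phi^{-1} = L^n$, the four resulting terms should collapse via $L^n - (L^n)_- = (L^n)_+$ to the super-commutator $[(L^n)_+, B_{mklp}]$; the flow $\hat D_n$ on $B_{mklp}$ produces $[-(\hat L^n)_-, B_{mklp}]$ more directly, because $\hat D_n X_{mklp} = 0$ and $\hat D_n \Phi = -(\hat L^n)_- \Phi$. The two statements for $\hat B_{mklp}$ are symmetric: $D_n \hat X_{mklp} = 0$ combined with $D_n\hat\Phi = (L^n)_+\hat\Phi$ yields $[(L^n)_+, \hat B_{mklp}]$, and dressing $[\hat D_n + D^{-n}, \hat X_{mklp}] = 0$ via $\hat D_n\hat\Phi = (\hat L^n)_+\hat\Phi$ together with $\hat\Phi D^{-n}\hat\Phi^{-1} = \hat L^n$ produces $[-(\hat L^n)_-, \hat B_{mklp}]$ via the dual cancellation $\hat L^n - (\hat L^n)_+ = (\hat L^n)_-$.

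The main obstacle is the sign bookkeeping for the Fermionic flows $D_{4i-1}, \hat D_{4i-1}$ combined with the odd parity carried by $\Gamma_1$ and $Q$. In each undressed identity one must apply the Leibniz rule with the correct sign $(-1)^{|A||B|}$ according to the super-degrees of the factors, and in the dressing step one must track the additional super-sign produced whenever the odd derivation passes across the Bosonic wave operators $\Phi, \hat\Phi$. These calculations are routine but delicate, and mirror the argument already carried out for the single-component case.
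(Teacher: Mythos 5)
Your proposal is correct and is essentially the paper's own approach: the paper proves the single-component analogue \eqref{Bflow} exactly by dressing the undressed identities $[D_{4k-2}-D^{4k-2},\Gamma_0^{n}\Gamma_1^{l}Q^{p}\partial^{m}]=[D_{4k-1}-D^{4k-1},\Gamma_0^{n}\Gamma_1^{l}Q^{p}\partial^{m}]=0$ and then states the two-component version as an unproved corollary, so your explicit construction of $X_{mklp}$, $\hat X_{mklp}$ and the four undressed identities is precisely the intended argument. The only (immaterial) slip is the sign of the exponents in $\hat X_{mklp}$: since $\hat L=\hat\Phi D^{-1}\hat\Phi^{-1}$ while the displayed definition of $\hat B_{mklp}$ uses $\hat L^{-2m}$, $\hat L^{-2m+1}$, $\hat L^{-1}$, the undressed operator should carry $D^{2m}$, $D^{2m-1}$, $D$ --- though the paper itself is inconsistent on this point and the undressed commutation identities hold for every power of $D$ in any case.
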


To prove that $B_{mklp}$ and $\hat B_{mklp}$ satisfy the B type condition, we need the following lemma.
\begin{lem}\label{BtypM2}
Operators $M_i$and $\hat M_i$ satisfy the following conjugate identities,
\eqa
M_i^*
=(-1)^iDL^{-1}M_iL D^{-1},\ \ \ \hat M_i^*
=(-1)^iD\hat L\hat M_i\hat L^{-1} D^{-1},\eeqa
\[
  \Q^*
=-DL^{-1}\Q L D^{-1},\ \ \hat \Q^*
=-D\hat L\hat \Q \hat L^{-1} D^{-1}.\]
\end{lem}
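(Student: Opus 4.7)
The plan is to mirror the argument already used for Lemma \ref{BtypM}, once for each of the two dressing operators, since the two-component structure decouples cleanly at the level of adjoints: all that matters for a given identity is which of $\Phi$ or $\hat{\Phi}$ appears in the dressing. The identities involving $M_i$ and $\Q$ are literally the content of Lemma \ref{BtypM}, so no new work is needed there; the real task is to establish the hatted versions.

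First I would verify the three ingredients that will feed the calculation: (i) the B type condition $\hat{\Phi}^*=D\hat{\Phi}^{-1}D^{-1}$, which is part of \eqref{phipsi2}; (ii) the conjugation rules $\hat{\Gm}_i^*=(-1)^i\hat{\Gm}_i$, which is the hatted analogue of the $\Gm_i^*=(-1)^i\Gm_i$ relation used in the previous proof; and (iii) $Q^*=-Q$, which is independent of the component and was already used. For (ii), each summand in $\hat{\Gm}_0$ has to be checked separately using the rule $(D^k)^*=(-1)^{k(k+1)/2}D^k$ together with $(PQ)^*=(-1)^{|P||Q|}Q^*P^*$: the terms $D^{-4k}$ and $D^{-1-4k}$ are self-adjoint since $-4k\equiv 0$ and $-1-4k\equiv 3\pmod 4$; the term $\d^{-1-2k}Q$ is self-adjoint because the sign $(-1)^{-1-2k}=-1$ coming from $(\d^n)^*=(-1)^n\d^n$ is cancelled by $Q^*=-Q$; and the purely bosonic $\d^{-2i-2j}$ terms are obviously self-adjoint. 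The hatted $\hat{\Gm}_1=\theta+\sum \hat t_{4k-1}\d^{-2k}$ is handled in the same way, reproducing the sign from the single-component case.

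With these adjoint rules in hand, I would apply the dressing identity to compute
\[
\hat{M}_i^*=(\hat{\Phi}\hat{\Gm}_i\hat{\Phi}^{-1})^*=\hat{\Phi}^{*-1}\hat{\Gm}_i^*\hat{\Phi}^*=(-1)^i D\hat{\Phi}D^{-1}\hat{\Gm}_i D\hat{\Phi}^{-1}D^{-1},
\]
inserting $\hat{\Phi}^{-1}\hat{\Phi}=1$ in the middle to rewrite this as $(-1)^i(D\hat{\Phi}D^{-1}\hat{\Phi}^{-1})\hat{M}_i(\hat{\Phi}D\hat{\Phi}^{-1}D^{-1})$. Since $\hat{L}=\hat{\Phi}D^{-1}\hat{\Phi}^{-1}$ and therefore $\hat{L}^{-1}=\hat{\Phi}D\hat{\Phi}^{-1}$, the two bracketed factors are exactly $D\hat{L}$ and $\hat{L}^{-1}D^{-1}$, giving the claimed identity $\hat{M}_i^*=(-1)^i D\hat{L}\hat{M}_i\hat{L}^{-1}D^{-1}$. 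The identity for $\hat{\Q}=\hat{\Phi}Q\hat{\Phi}^{-1}$ follows by the same calculation with $Q$ in place of $\hat{\Gm}_i$, using $Q^*=-Q$ to supply the overall minus sign.

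I expect the main obstacle to be the careful sign bookkeeping in step (ii): the presence of both negative powers of $D$ and odd-time Grassmann coefficients $\hat t_{4k-1}$ in $\hat{\Gm}_0$ forces one to track three independent sign conventions simultaneously (the $(D^k)^*$ formula, the super-Leibniz sign, and the parity of $Q$). Once the term-by-term check is set out in a table, however, all signs conspire so that $\hat{\Gm}_0^*=\hat{\Gm}_0$ and $\hat{\Gm}_1^*=-\hat{\Gm}_1$, after which the dressing computation is a one-line manipulation identical in form to the proof of Lemma \ref{BtypM}.
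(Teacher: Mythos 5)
Your proposal is correct and follows essentially the same route as the paper: the paper's proof is exactly the one-line dressing computation $\hat M_i^*=\hat\Phi^{*-1}\hat\Gamma_i^*\hat\Phi^*=(-1)^iD\hat\Phi D^{-1}\hat\Phi^{-1}\hat M_i\hat\Phi D\hat\Phi^{-1}D^{-1}=(-1)^iD\hat L\hat M_i\hat L^{-1}D^{-1}$ using $\hat\Phi^*=D\hat\Phi^{-1}D^{-1}$, with the unhatted and $\Q$ cases handled identically. Your term-by-term verification that $\hat\Gamma_i^*=(-1)^i\hat\Gamma_i$ is a detail the paper leaves implicit, and your signs check out.
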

\begin{prf}

Using
\[
 \Phi^*=D\Phi^{-1} D^{-1},\ \ \hat\Phi^*=D\hat\Phi^{-1} D^{-1},\]
  the following calculations
\[\notag
M_i^* =\Phi^{*-1}\Gamma_i^*\Phi^*=(-1)^iD\Phi D^{-1}\Gamma_i D\Phi^{-1} D^{-1}
=(-1)^iD\Phi D^{-1}\Phi^{-1}M_i\Phi D\Phi^{-1} D^{-1},\]
\[\notag
\hat M_i^* =\hat \Phi^{*-1}\hat \Gamma_i^*\hat \Phi^*=(-1)^iD\hat \Phi D^{-1}\hat \Gamma_i D\hat \Phi^{-1} D^{-1}
=(-1)^iD\hat \Phi D^{-1}\hat \Phi^{-1}\hat M_i\hat \Phi D\hat \Phi^{-1} D^{-1},\]
will lead to this lemma.
\end{prf}

It is easy to check the following proposition holds basing on the Lemma \ref{BtypM2} above.
\begin{prp}\label{asym}
Operators $B_{mklp}$ and $\hat B_{mklp}$ satisfy the B type condition, namely
\begin{equation}
B_{mklp}^*=-D  B_{mklp} D^{-1}, \quad \hat{B}_{mklp}^*=-D \hat{B}_{mklp}
 D^{-1}.
\end{equation}
\end{prp}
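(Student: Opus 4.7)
The plan is to imitate verbatim the argument already used for the single-component Proposition \ref{asym}, now drawing on the two-component conjugate identities collected in Lemma \ref{BtypM2}. Concretely, I will take the adjoint of each summand of $B_{mklp}$ and $\hat{B}_{mklp}$ using the rule $(PQ)^*=(-1)^{|P||Q|}Q^*P^*$ and the identity $(D^k)^*=(-1)^{k(k+1)/2}D^k$, then substitute $M_i^*=(-1)^iDL^{-1}M_iLD^{-1}$, $\Q^*=-DL^{-1}\Q LD^{-1}$ (and their hatted analogues with $L^{-1}$ replaced by $\hat L$ and $L$ by $\hat L^{-1}$). The key observation is that when one composes several such conjugations, the inner factors $LD^{-1}\cdot DL^{-1}=\mathrm{id}$ telescope, so one is left with a single outer sandwich $D(\cdot)D^{-1}$ around a rearranged product.

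For $B_{mklp}$ I would first compute
\[
(M_0^k M_1^l \Q^p L^{2m})^* = (-1)^{pl}(L^{2m})^*(\Q^p)^* (M_1^l)^* (M_0^k)^*,
\]
where the sign $(-1)^{pl}$ records the fermionic exchange between the odd factors $\Q^p$ and $M_1^l$; the operators $L^{2m}$ and $M_0^k$ are even, so they do not contribute extra signs. Substituting the conjugate identities of Lemma \ref{BtypM2} and telescoping one finds this equals
\[
(-1)^{pl+m+p+l}\,D\,L^{2m-1}\Q^p M_1^l M_0^k L\,D^{-1},
\]
after collecting the signs $(-1)^m$ from $(L^{2m})^*=(-1)^m DL^{2m-1}LD^{-1}$-style rearrangement, $(-1)^p$ from $(\Q^p)^*$, and $(-1)^l$ from $(M_1^l)^*$. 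A parallel computation on the second summand $(-1)^{pl+m+p+l}L^{2m-1}\Q^p M_1^l M_0^k L$ produces $-D\,M_0^k M_1^l \Q^p L^{2m}\,D^{-1}$. Subtracting, exactly as in the proof of Proposition \ref{asym} for the single-component case, yields $B_{mklp}^*=-D B_{mklp} D^{-1}$.

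For $\hat{B}_{mklp}$ the computation is structurally identical, the only change being that the conjugate identities involve $\hat L$ rather than $L^{-1}$, and the powers $\hat L^{-2m}$, $\hat L^{-2m+1}$, $\hat L^{-1}$ replace $L^{2m}$, $L^{2m-1}$, $L$. Since the telescoping of $\hat L D^{-1}\cdot D \hat L^{-1}=\mathrm{id}$ works the same way, and the combinatorics of signs $(-1)^{pl+m+p+l}$ is controlled solely by the parities of the factors (which are the same as in the unhatted case), the argument carries over without modification.

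The only genuine obstacle is the careful bookkeeping of signs: each transposition of an odd operator ($M_1$, $\Q$, or an odd power of $L$) through another must be tracked, and the adjoint of $D^k$ contributes the factor $(-1)^{k(k+1)/2}$ which must be combined with the signs from $(PQ)^*$. The exponent $pl+m+p+l$ in the definition of $B_{mklp}$ is designed precisely so that these signs collapse to an overall minus, matching the B type condition; verifying this collapse is the one thing I would grind through explicitly, but it is a direct parity check.
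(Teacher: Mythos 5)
Your proposal is correct and follows essentially the same route as the paper: the paper likewise expands the adjoint of each summand via $(PQ)^*=(-1)^{|P||Q|}Q^*P^*$, substitutes the conjugate identities of Lemma \ref{BtypM2}, telescopes the inner $LD^{-1}\cdot DL^{-1}$ (resp.\ $\hat L^{-1}D^{-1}\cdot D\hat L$) factors, and observes that the two summands exchange with exactly the sign $(-1)^{pl+m+p+l}$, producing the overall $-D(\cdot)D^{-1}$. The only cosmetic difference is that the paper carries out the hatted case explicitly and declares the unhatted one similar, whereas you describe the unhatted case first.
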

\begin{prf}
Using the Proposition \ref{BtypM2}, the following calculation will lead to the first identity of this proposition
\begin{eqnarray*}\hat B_{mklp}^*
&=&(\hat M_0^k\hat M_1^l\hat \Q^p\hat L^{2m}-(-1)^{pl+m+p+l}\hat L^{2m-1}\hat \Q^p\hat M_1^{l}\hat M_0^{k}\hat L)^*\\
&=&(-1)^{pl}\hat L^{2m*}(\hat \Q^p)^*\hat M_1^{l*}\hat M_0^{k*}+(-1)^{m+p+l}\hat L^{*}\hat M_0^{k*}\hat M_1^{l*}(\hat \Q^p)^*\hat L^{2m-1*}\\
&=&(-1)^{pl+m+p+l}D\hat L^{2m-1}\hat \Q^p\hat M_1^{l}\hat M_0^{k}\hat LD^{-1}-D\hat M_0^k\hat M_1^l\hat \Q^p\hat L^{2m}D^{-1}\\
&=&-D(\hat M_0^k\hat M_1^l\hat \Q^p\hat L^{2m}-(-1)^{pl+m+p+l}\hat L^{2m-1}\hat \Q^p\hat M_1^{l}\hat M_0^{k}\hat L)D^{-1}.
\end{eqnarray*}
 The second identity can be proved in a similar way.
\end{prf}

Now we can define  the following additional equations as
\begin{align}
&D_{mklp}\Phi=- (B_{mklp})_-\Phi,
\quad D_{mklp}\hat\Phi=(B_{mklp})_+\hat{\Phi}, \label{add}\\
&\hat D_{mklp}\Phi=- (\hat{B}_{mklp})_-\Phi, \quad
 \hat D_{mklp}\hat\Phi=(\hat{B}_{mklp})_+\hat{\Phi}. \label{add2}
\end{align}
These equations are equivalent to the following Lax equations
\begin{align}
&D_{mklp} L=[- (B_{mklp})_-,L],
\quad D_{mklp}\hat{L}=[(B_{mklp})_+,\hat L], \label{add}\\
&\hat D_{mklp} L=[- (\hat{B}_{mklp})_-,L], \quad
\hat D_{mklp}\hat{L}=[(\hat{B}_{mklp})_+,\hat{L}]. \label{add2}
\end{align}
Similarly, we will get the following proposition.
\begin{prp}\label{thm-st}
The flows \eqref{add} and \eqref{add2} commute with the flows of  the  supersymmetric two-component BKP hierarchy. Namely, for any $\bar D_{mnlp}=D_{mnlp}, \hat D_{mnlp}$
and $\bar D_k=D_k, \hat D_k$ one has
\begin{equation}\label{st}
\left[ \bar D_{mnlp},\bar D_k\right]=0, \quad m,n\in\Z_+; l,p=0,1; ~~ k=4i-2,4i-1,\ i \in\Z_+,
\end{equation}
which holds in the sense of acting on  $\Phi$ or $\hat\Phi$.
\end{prp}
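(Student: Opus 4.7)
The plan is to adapt the argument used in the single-component case (the analogous Proposition in Section 3) to the two-component setting. There are four commutators to verify, corresponding to the choices $\bar D_{mnlp}\in\{D_{mnlp},\hat D_{mnlp}\}$ and $\bar D_k\in\{D_k,\hat D_k\}$, each applied both to $\Phi$ and to $\hat\Phi$. The computations are completely parallel, so I would carry them out one case at a time using the same template.

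For a fixed pair, I would first expand $[\bar D_{mnlp},\bar D_k]\Phi$ by substituting the Sato-type equations \eqref{ppt1}--\eqref{ppt2} for $\bar D_k\Phi$ and the additional-flow equations \eqref{add}--\eqref{add2} for $\bar D_{mnlp}\Phi$. This produces four terms. Next, I would eliminate the derivatives $\bar D_{mnlp}(\bar L^k)_-$ and $\bar D_k(\bar B_{mnlp})_-$ by using the Lax equations \eqref{2bkpLax3} together with \eqref{add}--\eqref{add2}, and the flow identities \eqref{Bflow2} respectively. Finally, the four resulting terms should combine into a sum of three negative-projection super-brackets plus a commutator of $(\cdot)_-$-pieces, a combination of exactly the form displayed in the proof of the single-component Proposition, which vanishes by the elementary identities $(X_{\pm})_{\mp}=0$ and the super-Jacobi identity. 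The verification on $\hat\Phi$ is formally dual, with positive projections replacing negative ones under the exchange $\Phi\leftrightarrow\hat\Phi$, $L\leftrightarrow\hat L$.

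The main obstacle will be the super-sign bookkeeping. Every super-commutator involving an odd flow ($D_{4i-1}$ or $\hat D_{4i-1}$) produces a sign $(-1)^{(l+p)k}$ determined by the parities of $\bar B_{mnlp}$ and of $\bar D_k$, and this sign must match the one appearing in the corresponding flow equation in \eqref{Bflow2} for the cancellation to work out exactly. The mixed case $[D_{mnlp},\hat D_k]$ is the most delicate: it requires simultaneously invoking the $(-)$-part of the additional flow acting on $\Phi$ together with the $(+)$-part acting on $\hat\Phi$, and in addition the mixed Lax equations $D_{mnlp}\hat L=[(B_{mnlp})_+,\hat L]$ from \eqref{add} and $\hat D_k L=-[(\hat L^k)_-,L]$ from \eqref{2bkpLax2}. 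Once this case is handled correctly, the three remaining cases follow by the same mechanism.
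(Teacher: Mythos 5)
Your proposal follows essentially the same route as the paper: a case-by-case expansion using the Sato equations \eqref{ppt1}--\eqref{ppt2}, the additional-flow equations \eqref{add}--\eqref{add2}, and the evolution identities \eqref{Bflow2}, reducing each commutator to a combination of projected super-brackets (with the sign $(-1)^{(l+p)k}$) that cancels; the paper likewise works out the mixed case $[D_{mnlp},\hat D_k]\Phi$ and its dual on $\hat\Phi$ as representative examples. No substantive difference.
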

\begin{prf}
The  proposition can be checked case by case with the help of
eq.\eqref{Bflow2} and eqs.\eqref{add}-\eqref{add2}.
For example,

\begin{align}\label{}
&\left[D_{mnlp},
\hat D_k\right]\Phi \nn\\
&=D_{mnlp}
\hat D_k\Phi- (-1)^{(l+p)k}\hat D_kD_{mnlp}\Phi\nn\\
=&(-1)^{(l+p)k} [ (\hat L^k)_-,({B}_{mnlp})_-]\Phi
-[({B}_{mnlp})_+, \hat L^k]_- \Phi -(-1)^{(l+p)k}[(\hat L^k)_-,{B}_{mnlp}]_-\Phi =0, \nn
\end{align}
\begin{align}\label{}
&\left[ \hat D_{mnlp}, D_k\right]\hat{\Phi} \nn\\
=& (-1)^{(l+p)k}[ (L^k)_+,(\hat{B}_{mnlp})_+]\hat{\Phi}
+[-(\hat{B}_{mnlp})_-, L^k]_+ \hat{\Phi}-(-1)^{(l+p)k}[(L^k)_+,\hat{B}_{mnlp}]_+\hat{\Phi} =0. \nn
\end{align}
The other cases can be proved in similar ways. This is the end of this proposition.
\end{prf}

Similarly as the SBKP hierarchy, the algebraic structure of the additional symmetry of the S2BKP hierarchy will be talked about in the next proposition.
\begin{prp}
The algebra of additional symmetries of the two-component SBKP hierarchy  is isomorphic to the Lie algebra of super quasi-differential operators, which is isomorphic (as a Lie
algebra) to $SW_{1+\infty}\bigoplus SW_{1+\infty} $.
\end{prp}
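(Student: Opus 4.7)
The plan is to mirror the argument used in the single-component SBKP case, producing one isomorphism $SW_{1+\infty} \to \{B_{mklp}\}$ from the unhatted Orlov--Schulman data and a second isomorphism $SW_{1+\infty} \to \{\hat{B}_{mklp}\}$ from the hatted data, and then to verify that the two resulting families of additional flows commute so that their combined action lies in a direct sum.

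For the first summand I would repeat the assignment used in the previous section, namely
\[
z \mapsto L^2, \quad \xi \mapsto \Q + M_1 L^2, \quad \partial_z \mapsto M_0, \quad \partial_\xi \mapsto M_1.
\]
By Lemma \ref{thm-Mw}, these images satisfy $[\Q,M_0]=-M_1$, $[\Q,M_1]=1$, $[L^2,M_0]=1$, together with $[z,\xi]=0$ under the pullback, so the assignment extends to a Lie superalgebra homomorphism into the subalgebra generated by $\{M_0^{k}M_1^{l}\Q^{p}L^{2m}\}$. The B-type projection used in Proposition \ref{asym} (together with the fact that $B_{mklp}$ is the unique B-type combination that reduces to $M_0^{k}M_1^{l}\Q^{p}L^{2m}$ modulo $L$-conjugation) identifies the image of a homogeneous basis of $SW_{1+\infty}$ precisely with the generators $B_{mklp}$ of the flows \eqref{add}.

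For the second summand I would construct the parallel assignment
\[
\hat z \mapsto \hat L^{-2}, \quad \hat \xi \mapsto \hat \Q + \hat M_1 \hat L^{-2}, \quad \partial_{\hat z} \mapsto \hat M_0, \quad \partial_{\hat \xi} \mapsto \hat M_1,
\]
using the hatted half of Lemma \ref{thm-Mw}, which provides $[\hat \Q,\hat M_0]=-\hat M_1$, $[\hat \Q,\hat M_1]=1$, $[\hat L^{-2},\hat M_0]=1$. The same B-type projection argument, now applied to $\hat B_{mklp}$, realizes a basis of a second copy of $SW_{1+\infty}$ as the generators of the flows \eqref{add2}.

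The remaining step, and the main obstacle, is to check that the two subalgebras commute as flows on the pair $(\Phi,\hat\Phi)$. At the level of abstract generators the commutativity is immediate: $M_i, \Q$ depend only on $\Phi$ and the time variables $t$, while $\hat M_i, \hat \Q$ depend only on $\hat\Phi$ and $\hat t$, so the algebraic commutators $[B_{mklp},\hat B_{m'k'l'p'}]$ vanish. Translating this into commutativity of the associated flows requires the same case-by-case Zero-Curvature computation performed in Proposition \ref{thm-st}, but now with both generators being additional rather than one being among the $D_k$ or $\hat D_k$; one must carry along the sign factor $(-1)^{(l+p)(l'+p')}$ arising from the Fermionic degrees, and use the fact that $(B_{mklp})_-$ acts on $\hat\Phi$ through $(B_{mklp})_+$ (and vice versa) in the hatted sector. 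Once this cross-commutativity is established, the sum of the two homomorphisms is an injection from $SW_{1+\infty}\bigoplus SW_{1+\infty}$ into the algebra of additional symmetries, and by construction it exhausts the generators $B_{mklp}, \hat B_{mklp}$, yielding the claimed isomorphism.
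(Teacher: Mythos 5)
Your proposal follows essentially the same route as the paper: both identify the two copies of $SW_{1+\infty}$ via the assignments $z\mapsto L^2$, $\xi\mapsto \Q+M_1L^2$, $\partial_z\mapsto M_0$, $\partial_\xi\mapsto M_1$ and their hatted counterparts, relying on the commutation relations of Lemma \ref{thm-Mw} and the B-type projection onto the generators $B_{mklp}$, $\hat B_{mklp}$. Your additional verification that the two families commute as flows on $(\Phi,\hat\Phi)$ is a detail the paper leaves implicit (it only remarks that $\xi$ commutes with $z$ and $\hat\xi$ with $\hat z$), but it does not change the underlying argument.
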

\begin{proof}
The isomorphism is given by
\[z&\mapsto\partial, \ \ \ \ \ \xi&\mapsto  Q+\Gamma_1\partial ,\\
\d_z&\mapsto \Gamma_0, \ \ \ \ \  \d_{\xi}&\mapsto  \Gamma_1,
\]
\[\hat z&\mapsto\partial, \ \ \ \ \ \hat \xi&\mapsto Q+\hat \Gamma_1\partial ,\\
\d_{\hat z}&\mapsto \hat \Gamma_0, \ \ \ \ \  \d_{\hat \xi}&\mapsto  \hat \Gamma_1,
\]
which further lead to
\[z&\mapsto L^2, \ \ \ \ \ \xi&\mapsto  \Q+M_1L^2 ,\\
\d_z&\mapsto M_0, \ \ \ \ \  \d_{\xi}&\mapsto  M_1,
\]
\[\hat z&\mapsto\hat  L^{-2}, \ \ \ \ \ \hat \xi&\mapsto \hat  \Q+\hat M_1\hat L^{-2} ,\\
\d_{\hat z}&\mapsto \hat M_0, \ \ \ \ \  \d_{\hat \xi}&\mapsto  \hat M_1.
\]
One can find the above construction keeps $\xi$ commuting with $z$ and $\hat\xi$ commuting with $\hat z$.
\end{proof}

 If we do a (4n,2)-reduction from the supersymmetric two-component BKP hierarchy,
 a reduced hierarchy called the supersymmetric D type Drinfeld--Sokolov
hierarchies with a supersymmetric Block type additional  symmetry which will be discussed in the next section.

\section{Supersymmetric D type Drinfeld--Sokolov hierarchy}

Assume a new Lax operator $\cL$ which has the following relation with two Lax operators of the supersymmetric two-component BKP hierarchy introduced in the last section
\begin{equation}\label{constraint}\cL=L^{4n}=\hat{L}^{2},\ n\geq 2.\end{equation}
Then the Lax operators of the supersymmetric two-component BKP hierarchy will be reduced to the following Lax operator of the supersymmetric D type Drinfeld--Sokolov hierarchy whose Bosonic case can be seen in \cite{LWZ,bkpds,BKP-DS}
\begin{equation}\label{mL}
   \cL=D^{4n}+\sum_{i=1}^{n} D^{-1}\left(v_i
D^{4i-1}+D^{4i-1} v_i\right) +D^{-1} \rho D^{-1} \rho;\ |v_i|=0,|\rho|=1.
\end{equation}

One can easily find the Lax operator $\cL$ of the supersymmetric D type Drinfeld--Sokolov hierarchy will  satisfy the following B type condition
\eqa\label{symcL}\cL^*=D \cL D^{-1}.\eeqa
This Lax operator $\cL$ of the supersymmetric D type Drinfeld--Sokolov hierarchy has the following dressing structure\cite{bkpds}
\begin{equation} \label{dress}
\cL=\Phi D^{4n}\Phi^{-1}=\hat{\Phi} D^{-2}\hat{\Phi}^{-1}.
\end{equation}
Here
\begin{align} \label{Phi}
\Phi=1+\sum_{i\ge 1}a_i D^{-i},\quad \hat{\Phi}=1+\sum_{i\ge 1}b_i
D^{i}
\end{align}
are pseudo supersymmetric differential operators  that also
satisfy the following B type condition
\begin{equation}\label{phipsi3}
\Phi^*= D\Phi^{-1} D^{-1},\quad \hat{\Phi}^*= D\hat{\Phi}^{-1}
D^{-1}.
\end{equation}
The dressing structures inspire us to define two fractional operators as
\begin{equation}
\cL^{\frac1{4n}}= D+\sum_{i\ge1}u_i  D^{-i}, \quad \cL^{\frac12}=
D^{-1}\hat{u}_{-1}+\sum_{i\ge1}\hat{u}_i D^i.
\end{equation}

Two fractional operators $\cL^{\frac1{4n}}$ and $\cL^{\frac12}$ can be rewritten  in a dressing form as
\begin{equation} \label{PPh}
\cL^{\frac1{4n}}=\Phi D\Phi^{-1},\quad \cL^{\frac12}=\hat{\Phi} D^{-1}\hat{\Phi}^{-1}.
\end{equation}

 The   supersymmetric D type Drinfeld--Sokolov hierarchy being considered in this paper is defined by the following
Lax equations:
\begin{align}\label{PPht}
& D_k  \cL=[(\cL^{\frac k{4n}})_+, \cL], \quad \hat D_k  \cL=[-(\cL^{\frac k2})_-, \cL],\ \ k=4i-2,4i-1,\ i \in\Z_+.
\end{align}

The dressing operators $\Phi$ and $\hat \Phi$ are same as the ones of the supersymmetric two-component BKP hierarchy.
Given $\cL$, the dressing operators $\Phi$ and
$\hat{\Phi}$ are  uniquely determined up to a multiplication to the
right by operators of the form \eqref{Phi} and \eqref{phipsi3} with
constant coefficients. The supersymmetric D type Drinfeld-Sokolov hierarchies
can also be redefined as the following Sato equations
\begin{align}
&D_k \Phi=- (\cL^{\frac k{4n}})_-\Phi, \quad
D_k  \hat{\Phi}=(\cL^{\frac k{4n}})_+ \hat{\Phi}, \label{ppt1}\\
&\hat D_k  \Phi=- (\cL^{\frac k{2}})_-\Phi, \quad \hat D_k \hat{\Phi}=(\cL^{\frac k{2}})_+\hat{\Phi}, \label{ppt2}
\end{align}
with $k=4i-2,4i-1,\ i \in\Z_+$.

After the above preparation, we will show that this supersymmetric D type Drinfeld-Sokolov hierarchy has a nice Block symmetry as its appearance  in the Bigraded Toda hierarchy \cite{ourBlock}.

\section{Supersymmetric Block symmetries of supersymmetric D type Drinfeld-Sokolov hierarchies }

In this section, we will put the constrained condition
eq.\eqref{constraint} into the construction of the flows of the additional
symmetry which form a $N=2$ supersymmetric extension of the well-known Block algebra \cite{Block}.

With the dressing operators given in eq.\eqref{PPh}, we introduce two new Orlov-Schulman operators as following
\eqa \label{cMandM}\cM_i=M_iL^{2-4n},\ \ \hat \cM_i=\hat M_i\hat L^{-4}.\eeqa

It is easy to see the following lemma holds.
\begin{lem}\label{thm-Mw}
The operators $\cM_j$ and $\hat{\cM_j}$ satisfy
\begin{equation}\label{}
[\cL, \cM_0]=1, \quad [\cL,\hat{\cM_0}]=1;\ \
[\hat \Q, \hat \cM_0]=-\hat \cM_1;
\end{equation}

and
\begin{equation}\label{bkpMt}
D_k \cM_j=[(\cL^{\frac k{4n}})_+,\cM_j],\ \ \ D_k \Q=[(\cL^{\frac k{4n}})_+,\Q],\quad\ k=4i-2,4i-1,\ i \in\Z_+,
\end{equation}

\begin{equation}
\hat D_k  \cM_j=[-(\cL^{\frac k2})_-, \cM_j],\ \ \ \hat D_k  \Q=[-( \cL^{\frac k2})_-,\Q],\quad\ k=4i-2,4i-1,\ i \in\Z_+,
\end{equation}

\begin{equation}
D_k\hat  \cM_j=[(\cL^{\frac k{4n}})_+,\hat \cM_j],\ \ \ D_k \hat \Q=[(\cL^{\frac k{4n}})_+,\hat \Q],\quad\ k=4i-2,4i-1,\ i \in\Z_+,
\end{equation}
\begin{equation}\label{bkpMt}
\hat D_k\hat  \cM_j=[-(\cL^{\frac k2})_-,\hat \cM_j],\ \ \ \hat D_k \hat \Q=[-(\cL^{\frac k2})_-,\hat \Q],\quad\ k=4i-2,4i-1,\ i \in\Z_+,
\end{equation}
which can be simplified to
\begin{equation}\label{Mt}
D_k \bar\cM_j=[(\cL^{\frac k{4n}})_+,\bar\cM_j],\quad
 \hat D_k\bar\cM_j=[-(\cL^{\frac k{2}})_-, \bar\cM_j],
\end{equation}
where
 $\bar{\cM_j}=\cM_j$ or $\hat{\cM_j}, k=4i-2,4i-1,\ i \in\Z_+$.
\end{lem}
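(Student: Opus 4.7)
The plan is to deduce every statement of the lemma from the corresponding identities for the S2BKP operators $M_j,\hat M_j,\Q,\hat\Q,L,\hat L$ established in the preceding lemma, using the reduction $\cL=L^{4n}=\hat L^2$ together with the (super) Leibniz rule for the (super)commutator. A key point is that under this reduction the flow generators match: $(L^k)_+=(\cL^{k/4n})_+$ and $(\hat L^k)_-=(\cL^{k/2})_-$, so the reduced flows $D_k,\hat D_k$ act on $\Phi,\hat\Phi$ exactly as the S2BKP flows do, and the preceding lemma then applies verbatim.

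First I would handle the three commutation relations. Starting from $[L^2,M_0]=1$, a one-line induction on $k$ yields $[L^{2k},M_0]=k\,L^{2k-2}$, hence $[L^{4n},M_0]=2n\,L^{4n-2}$. Combining this with $[L^{4n},L^{2-4n}]=0$ via Leibniz gives
\[
[\cL,\cM_0]=[L^{4n},M_0L^{2-4n}]=[L^{4n},M_0]\,L^{2-4n},
\]
which produces the stated commutator up to the overall normalization absorbed into \eqref{cMandM}. The same strategy applied to $[\hat L^{-2},\hat M_0]=1$ yields $[\cL,\hat\cM_0]$. For $[\hat\Q,\hat\cM_0]=-\hat\cM_1$, I use the super-Leibniz rule: since $\hat L^{-4}$ is even and $[\hat\Q,\hat L^{-4}]=0$ (a consequence of the anticommuting pair $\{D,Q\}=0$ dressed through $\hat\Phi$), one gets $[\hat\Q,\hat\cM_0]=[\hat\Q,\hat M_0]\hat L^{-4}=-\hat M_1\hat L^{-4}=-\hat\cM_1$.

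For the flow identities I would differentiate $\cM_j=M_jL^{2-4n}$ with respect to $D_k$, using the preceding lemma to supply $D_kM_j=[(L^k)_+,M_j]$, together with the Lax equation $D_kL=[(L^k)_+,L]$ which implies $D_kL^{2-4n}=[(L^k)_+,L^{2-4n}]$. The super-Leibniz rule then collapses these into
\[
D_k\cM_j=[(L^k)_+,M_jL^{2-4n}]=[(\cL^{k/4n})_+,\cM_j],
\]
as desired. An identical argument with $-(\hat L^k)_-$ in place of $(L^k)_+$ gives the $\hat D_k$-action on $\cM_j$, and repeating the whole procedure on $\hat\cM_j=\hat M_j\hat L^{-4}$ yields the remaining two flow equations. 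The identities for $\Q$ and $\hat\Q$ are immediate from the preceding lemma since these operators are not multiplied by any power of $L$ or $\hat L$; the final unified display \eqref{Mt} is then simply a repackaging of the four cases.

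The principal obstacle is the sign bookkeeping in the super-Leibniz rule: one must verify the parities $|L|=|\hat L|=1$, $|M_j|=|\hat M_j|=j$, $|\Q|=|\hat\Q|=1$ to ensure that in every factorization $[X,YZ]=[X,Y]Z\pm Y[X,Z]$ the relevant sign is $+1$, so that the even factors $L^{2-4n}$ and $\hat L^{-4}$ genuinely behave as inert multiplicative tails commuting through the supercommutator. Once this parity check is carried out, the remainder of the proof reduces to the mechanical substitutions above.
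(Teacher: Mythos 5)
Your proposal is correct, and it supplies an argument where the paper gives none: the paper disposes of this lemma with ``It is easy to see the following lemma holds,'' so the reduction-plus-Leibniz computation you describe is exactly the intended (and only reasonable) route. Your key observations are all sound: under $\cL=L^{4n}=\hat L^2$ one has $(\cL^{k/4n})_+=(L^k)_+$ and $(\cL^{k/2})_-=(\hat L^k)_-$, the tails $L^{2-4n}$ and $\hat L^{-4}$ are even and are annihilated by the relevant brackets (for $[\hat\Q,\hat L^{-4}]=0$ one uses $\{D,Q\}=0$, hence $[Q,\d^2]=0$, dressed by $\hat\Phi$), and the flow identities then follow from Lemma 6.2 by the graded Leibniz rule, the signs matching because $|D_k|=|(L^k)_+|$. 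One point deserves to be stated more bluntly than ``normalization absorbed into the definition'': with $\cM_0=M_0L^{2-4n}$ and $\hat\cM_0=\hat M_0\hat L^{-4}$ as literally written, your computation yields $[\cL,\cM_0]=2n$ and $[\cL,\hat\cM_0]=-1$, not $1$; the lemma as stated is therefore only true after rescaling $\cM_0$ by $\tfrac1{2n}$ and flipping the sign of $\hat\cM_0$ (the paper itself silently switches to $\hat\cM_0=-\hat M_0\hat L^{-4}$ in the proof of the subsequent lemma). This is an inconsistency in the paper's definitions rather than a gap in your argument, but a complete write-up should fix the constants explicitly rather than wave at them.
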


To make the operators used in the additional symmetry satisfying the B type condition, we need to prove the following $B$ type property of $\cM_i-\hat \cM_i$ which is
included in the following lemma.

\begin{lem}\label{asymM-M2}
The difference of two Orlov-Schulman operators $\cM_0$ and $\hat \cM_0$ for the supersymmetric D type Drinfeld-Sokolov hierarchy has the following D type property:
\begin{align}
\cL^*(\cM_0-\hat \cM_0)^*=-D\cL(\cM_0 -\hat\cM_0 )D^{-1}.
\end{align}
\end{lem}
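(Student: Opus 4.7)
My plan is to unfold both sides of the identity using the dressing definitions from eq.\eqref{cMandM}, apply the conjugation rules from Lemma \ref{BtypM2} to simplify the super-adjoints, and then reduce the whole claim to a compact B-type symmetry of a single combined operator.

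First I will compute $\cM_0^*$ and $\hat\cM_0^*$ separately. Since $M_0$ and $L^{2-4n}$ are both of even super-degree, the super-adjoint rule reduces to $\cM_0^*=(L^{2-4n})^*M_0^*$. Iterating $L^*=-DLD^{-1}$ yields the general formula $(L^k)^*=(-1)^{k(k+1)/2}DL^kD^{-1}$; at $k=2-4n$ this sign is negative, so combined with $M_0^*=DL^{-1}M_0LD^{-1}$ from Lemma \ref{BtypM2}, one obtains $\cM_0^*=-DL^{1-4n}M_0LD^{-1}$. The parallel calculation, based on $(\hat L^{-4})^*=D\hat L^{-4}D^{-1}$ and $\hat M_0^*=D\hat L\hat M_0\hat L^{-1}D^{-1}$, yields $\hat\cM_0^*=D\hat L^{-3}\hat M_0\hat L^{-1}D^{-1}$.

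Next I will left-multiply each expression by $\cL^*=D\cL D^{-1}$, coming from eq.\eqref{symcL}. The reduction constraint $\cL=L^{4n}=\hat L^2$ from eq.\eqref{constraint} makes the external $L$- and $\hat L$-powers telescope, so the two products collapse to $\cL^*\cM_0^*=-DLM_0LD^{-1}$ and $\cL^*\hat\cM_0^*=D\hat L^{-1}\hat M_0\hat L^{-1}D^{-1}$. These are the two terms that have to be compared with the corresponding pieces of the right-hand side.

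To deal with the right-hand side I will invoke the commutators $[\cL,\cM_0]=[\cL,\hat\cM_0]=1$ from Lemma \ref{thm-Mw}, which imply that $\cL$ commutes with the difference $\cM_0-\hat\cM_0$. This lets me rewrite $\cL(\cM_0-\hat\cM_0)=(\cM_0-\hat\cM_0)\cL=M_0L^2-\hat M_0\hat L^{-2}$, and the identity in question collapses to the single B-type symmetry $[M_0L^2-\hat M_0\hat L^{-2}]^*=-D[M_0L^2-\hat M_0\hat L^{-2}]D^{-1}$, which can then be verified by expanding each side using the basic commutators $[L^2,M_0]=1$ and $[\hat L^{-2},\hat M_0]=1$. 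The main obstacle will be the bookkeeping of the sub-leading constant corrections appearing when pushing $L$ past $M_0$ inside $LM_0L$ and, dually, $\hat L$ past $\hat M_0$ inside $\hat L^{-1}\hat M_0\hat L^{-1}$: neither $M_0L^2$ nor $\hat M_0\hat L^{-2}$ is individually B-type, the corrections of order $\tfrac12$ survive each term, and the crux of the argument is that under the D-type reduction these $L$- and $\hat L$-branch corrections line up with opposite signs and cancel exactly in the difference $M_0L^2-\hat M_0\hat L^{-2}$, which is precisely the content encoded by the lemma.
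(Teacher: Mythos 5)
Your route is the paper's own: write $\cM_0=M_0L^{2-4n}$ and $\hat\cM_0$ in terms of $\hat M_0\hat L^{-4}$, take super-adjoints via Lemma~\ref{BtypM2}, multiply by $\cL^*=D\cL D^{-1}$ so that the constraint $\cL=L^{4n}=\hat L^{2}$ telescopes the outer powers, and close the argument with the commutativity of $\cL$ and $\cM_0-\hat\cM_0$. Your intermediate expressions $\cM_0^*=-DL^{1-4n}M_0LD^{-1}$ and $D\hat L^{-3}\hat M_0\hat L^{-1}D^{-1}$ are exactly the two terms displayed in the paper's proof, and your sign count $(L^{2-4n})^*=-DL^{2-4n}D^{-1}$ is correct. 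However, there are two concrete problems. First, a sign: the paper's proof (and the relation $[\cL,\hat\cM_0]=+1$ that you quote) actually requires $\hat\cM_0=-\hat M_0\hat L^{-4}$; the displayed definition without the minus sign gives $[\cL,\hat M_0\hat L^{-4}]=-1$, hence $[\cL,\cM_0-\hat\cM_0]=2\neq0$. You use the plus-sign normalization when computing $\hat\cM_0^*$ and simultaneously invoke $[\cL,\hat\cM_0]=1$ to commute $\cL$ through the difference; these two choices are incompatible, so as written the step $\cL(\cM_0-\hat\cM_0)=(\cM_0-\hat\cM_0)\cL=M_0L^2-\hat M_0\hat L^{-2}$ does not hold. (The inconsistency originates in the paper itself, but your proof has to pick one convention and stick with it.)

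Second, and more seriously, the reduction to ``$M_0L^2-\hat M_0\hat L^{-2}$ is B-type'' is where all the content of the lemma sits, and you only assert the cancellation of the sub-leading corrections. A direct computation gives $(M_0L^2)^*=-DLM_0LD^{-1}=-D\bigl(M_0L^2+[L,M_0]L\bigr)D^{-1}$ and $(\hat M_0\hat L^{-2})^*=-D\bigl(\hat M_0\hat L^{-2}+[\hat L^{-1},\hat M_0]\hat L^{-1}\bigr)D^{-1}$, so what must be shown is that $[L,M_0]L$ and $[\hat L^{-1},\hat M_0]\hat L^{-1}$ coincide (with the appropriate relative sign). The relations $[L^2,M_0]=1$ and $[\hat L^{-2},\hat M_0]=1$ that you propose to use only control the anticommutators $L[L,M_0]+[L,M_0]L$ and $\hat L^{-1}[\hat L^{-1},\hat M_0]+[\hat L^{-1},\hat M_0]\hat L^{-1}$, not the individual products; note that $[L,M_0]=\Phi[D,\Gm_0]\Phi^{-1}$ with $[D,\Gm_0]=\theta-\sum_k(4k-1)t_{4k-1}\d^{2k-1}$, which is not a power of $L$ and carries explicit $t$-dependence, while the hatted correction carries $\hat t$-dependence. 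So the claimed cancellation cannot be obtained from the bookkeeping you describe; one has to undress to $\Gm_0$ and $\hat\Gm_0$ and confront these correction terms directly, which is precisely the step the lemma's proof must supply. As it stands, your final paragraph restates the lemma rather than proving it.
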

\begin{prf}
It is easy to find  the two Orlov-Schulman operators $\cM_0$ and $\hat \cM_0$ of the supersymmetric D type Drinfeld-Sokolov hierarchy can be expressed by Orlov-Schulman operators $M_0,\hat M_0$ and Lax operators $L,\hat L$ of the supersymmetric two-component BKP hierarchy as

\eqa \label{cMandM}\cM_0=M_0L^{2-4n},\ \ \hat \cM_0=-\hat M_0\hat L^{-4}.\eeqa

Using Lemma \ref{BtypM2}, putting eq.\eqref{cMandM} into $(\cM_0-\hat \cM_0)^*$ can lead to
\begin{align}
&(\cM_0-\hat \cM_0)^*=-DL^{1-4n}M_0LD^{-1}+D\hat L^{-3}\hat M_0\hat L^{-1} D^{-1}\\
&=-D(L^{1-4n}M_0-L^{-4n})D^{-1}+D(\hat L^{-4}\hat M_0+\hat L^{-2}) D^{-1},
\end{align}
which can further lead to
\begin{align}
\cL^*(\cM_0-\hat \cM_0)^*=-D(\cL\cM_0 -\cL\hat\cM_0)D^{-1}.
\end{align}

In the above calculation, the commutativity between $\cL$ and $\cM_0 -\hat\cM_0$ is already used.
Till now, the proof is finished.
\end{prf}

One can also get
\begin{align}
&\cM_i^*=-DL^{-1}\cM_iLD^{-1}, \ \ \hat \cM_i^*=-D\hat L\hat \cM_i\hat L^{-1} D^{-1}.
\end{align}

For the supersymmetric D-type Drinfeld-Sokolov hierarchy, we define the additional operator $\B_{mk}^{lp\hat l\hat p}$ as
\begin{align}
\B_{mk}^{lp\hat l\hat p}=(\cM_0-\hat \cM_0)^{k}(M_1^{l}\Q^p\hat M_1^{\hat l}\hat \Q^{\hat p}-(-1)^{\prod+\sum+k}\hat L^{-1}\hat \Q^{p}\hat M_1^{\hat l}\hat LL\Q^{p}M_1^{l}L^{-1})\cL^m,\ \
\end{align}
where $l,p,\hat l,\hat p=0,1;\ \prod=\prod_{a,b=l,p,\hat l,\hat p}ab,\sum=\sum_{a=l,p,\hat l,\hat p}a.$

One can get the following proposition.
\begin{prp}\label{asym}
The operator $\B_{mk}^{lp\hat l\hat p}$ satisfies a B type condition, namely
\begin{align}
(\B_{mk}^{lp\hat l\hat p})^*=-D\B_{mk}^{lp\hat l\hat p}D^{-1},\ \  l,p,\hat l,\hat p=0,1.
\end{align}
\end{prp}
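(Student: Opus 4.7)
The plan is to mimic the strategy used in Propositions 3.6 and 7.4 of the paper, namely to compute $(\B_{mk}^{lp\hat l\hat p})^*$ by repeatedly applying the super-adjoint rule $(PQ)^* = (-1)^{|P||Q|} Q^* P^*$ together with the B-type identities already proved earlier in the excerpt. Writing $\B_{mk}^{lp\hat l\hat p}=X\,Y\,\cL^m$ with $X=(\cM_0-\hat\cM_0)^k$ and $Y$ the bracketed expression, I first reverse the outer product to obtain
\begin{align}
(\B_{mk}^{lp\hat l\hat p})^*=(-1)^{\epsilon_0}\,(\cL^m)^*\,Y^*\,X^*,
\end{align}
where $\epsilon_0$ accumulates the signs dictated by the degrees of $\cL^m$, $Y$, and $X$.

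Next I would handle the outer factors. The B-type condition $\cL^*=D\cL D^{-1}$ from \eqref{symcL} combined with Lemma~\ref{asymM-M2} gives a clean expression for $(\cL^m)^*X^*$ with a single outer conjugation by $D$; more precisely, $\cL^*(\cM_0-\hat\cM_0)^* = -D\,\cL(\cM_0-\hat\cM_0)\,D^{-1}$, and iterating this inside products shows that $(\cL^m)^*((\cM_0-\hat\cM_0)^k)^*$ equals, up to a computable sign, $D\,(\cM_0-\hat\cM_0)^k\cL^m\,D^{-1}$, because $\cL$ commutes with $\cM_0-\hat\cM_0$ (as used in the proof of Lemma~\ref{asymM-M2}). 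This reduces the problem to evaluating $Y^*$ and verifying that the middle string is conjugated cleanly by $D(\cdot)D^{-1}$.

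The core step is $Y^*$. I would expand
\begin{align}
Y^*=(M_1^{l}\Q^p\hat M_1^{\hat l}\hat\Q^{\hat p})^{*}-(-1)^{\prod+\sum+k}(\hat L^{-1}\hat\Q^{\hat p}\hat M_1^{\hat l}\hat L\,L\,\Q^{p}M_1^{l}L^{-1})^{*},
\end{align}
and then apply the super-adjoint rule factor by factor. Each conjugation produces an inner factor $DL^{-1}(\cdot)LD^{-1}$ or $D\hat L(\cdot)\hat L^{-1}D^{-1}$ from the identities in Lemma~\ref{BtypM2}, and these inner $D,D^{-1}$ pairs telescope via $DAD^{-1}\cdot DBD^{-1}=D(AB)D^{-1}$, leaving only the intended outermost $D$ and $D^{-1}$. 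The $L^{\pm1}$ and $\hat L^{\pm1}$ generated in conjugating $M_1,\Q$ (resp.\ $\hat M_1,\hat\Q$) are exactly the factors that appear explicitly in the second summand of $Y$; thus the adjoint of the first summand is converted into the form of the second (and vice versa), producing the expected swap.

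The main obstacle, and the step that requires real care rather than insight, is sign bookkeeping. I must track three independent sources of sign: the supercommutation sign $(-1)^{|P||Q|}$ accumulated by each transposition in $(PQ)^*$ applied to a string of up to seven factors of mixed parity; the factor $(-1)^{i}$ coming from $M_i^*$ (which contributes only when $i=1$, i.e.\ from the $M_1$ and $\hat M_1$ pieces) and the sign $-1$ from $\Q^*,\hat\Q^*$; and the explicit sign $(-1)^{\prod+\sum+k}$ built into the definition of $\B_{mk}^{lp\hat l\hat p}$. The claim is that the exponents $\prod=\prod_{a,b}ab$ and $\sum=\sum a$, together with the $+k$, are chosen so that after the swap the two terms are exchanged with an overall minus sign, giving exactly $Y^*=-D\,Y\,D^{-1}$ (up to the $L^{\pm1},\hat L^{\pm1}$ reabsorbed into the second term). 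Combined with Step 2, this yields $(\B_{mk}^{lp\hat l\hat p})^{*}=-D\,\B_{mk}^{lp\hat l\hat p}\,D^{-1}$. I would verify the sign count first in the boundary cases $l=p=\hat l=\hat p=0$ (where only $\cL^m$ and $(\cM_0-\hat\cM_0)^k$ contribute and the result follows directly from Lemma~\ref{asymM-M2}) and then in each of the sixteen sub-cases determined by $(l,p,\hat l,\hat p)\in\{0,1\}^4$, so as to be confident that the exponent $\prod+\sum+k$ is indeed the correct universal formula.
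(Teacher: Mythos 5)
Your proposal follows essentially the same route as the paper's own proof: reverse the product with the super-adjoint rule, conjugate the middle factors $M_1^l,\Q^p,\hat M_1^{\hat l},\hat\Q^{\hat p}$ via Lemma~\ref{BtypM2} so the inner $D,D^{-1}$ pairs telescope, treat the outer factors $\cL^m$ and $(\cM_0-\hat\cM_0)^k$ with \eqref{symcL} and Lemma~\ref{asymM-M2}, and check that the prefactor $(-1)^{\prod+\sum+k}$ makes the two summands exchange with an overall minus sign. Your additional plan to verify the sign exponent case by case over $(l,p,\hat l,\hat p)\in\{0,1\}^4$ is just extra diligence on the same computation the paper carries out in one chain of equalities.
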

\begin{prf}
Using the Proposition \ref{BtypM2}, the following calculation will lead to
\begin{eqnarray*}(\B_{mk}^{lp\hat l\hat p})^*
&=&((\cM_0-\hat \cM_0)^{k}(M_1^{l}\Q^p\hat M_1^{\hat l}\hat \Q^{\hat p}-(-1)^{\prod+\sum+k}\hat L^{-1}\hat \Q^{p}\hat M_1^{\hat l}\hat LL\Q^{p}M_1^{l}L^{-1})\cL^m)^*\\
&=&\cL^{m*}[(-1)^{\prod}(\hat \Q^{\hat p})^*\hat M_1^{\hat l*}\Q^{p*}M_1^{l*}+(-1)^{\sum+k}L^{-1*}M_1^{l*}\Q^{p*} L^*\hat L^*\hat M_1^{\hat l*}\hat \Q^{\hat p*}\hat L^{-1*}](\cM_0-\hat \cM_0)^{k*}\\
&=&D\cL^{m}[(-1)^{\prod+\sum+k}\hat L^{-1}\hat \Q^{\hat p}\hat M_1^{\hat l}\hat LL\Q^{p}M_1^{l}L^{-1}-(M_1^{l}\Q^p\hat M_1^{\hat l}\hat \Q^{\hat p})](\cM_0-\hat \cM_0)^{k}D^{-1}\\
&=&-D(\cM_0-\hat \cM_0)^{k}(M_1^{l}\Q^p\hat M_1^{\hat l}\hat \Q^{\hat p}-(-1)^{\prod+\sum+k}\hat L^{-1}\hat \Q^p\hat M_1^{\hat l}\hat LL\Q^{p}M_1^{l}L^{-1})\cL^{m}D^{-1}.
\end{eqnarray*}
\end{prf}

That means it is reasonable to define the additional flow of the supersymmetric D type Drinfeld--Sokolov hierarchy as
\begin{align}\label{blockflow}
&\frac{\pd \cL}{\pd c_{mk}^{lp\hat l\hat p}}=[-({\B}_{mk}^{lp\hat l\hat p})_-, \cL],\ \ l,p,\hat l,\hat p=0,1; m,k\in \Z_+.
\end{align}
Whether these additional flows are symmetries of the supersymmetric D type Drinfeld--Sokolov hierarchy will be answered in the next proposition.

\begin{prp}
The flows in eq.\eqref{blockflow} can commute with original flows of  the supersymmetric Drinfeld--Sokolov hierarchy of type $D$, namely,
\begin{equation*}
\left[\frac{\pd}{\pd c_{mk}^{lp\hat l\hat p}}, D_n\right]=0, \quad
\left[\frac{\pd}{\pd c_{mk}^{lp\hat l\hat p}}, \hat{D}_n\right]=0,
\end{equation*}
where $\ l,p,\hat l,\hat p=0,1; m,k\in \Z_+,n=4i-2,4i-1,\ i \in\Z_+,$
which hold in the sense of acting on  $\Phi$, $\hat\Phi$ or $\cL.$

\end{prp}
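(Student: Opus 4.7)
The plan is to model the proof on the analogous commutation statement proved for the S2BKP hierarchy in the previous section. The key observation is that once we lift the block flow \eqref{blockflow} to the level of the dressing operators, the verification reduces to a standard Zakharov--Shabat type computation of projected supercommutators, using only the Lax-type evolution of $\B_{mk}^{lp\hat l\hat p}$ under the hierarchy flows.

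First, I recast \eqref{blockflow} as a pair of Sato-type equations. Since $\cL = \Phi D^{4n}\Phi^{-1} = \hat\Phi D^{-2}\hat\Phi^{-1}$ and $\B_{mk}^{lp\hat l\hat p}$ satisfies the B-type condition established in the proposition immediately preceding \eqref{blockflow}, the definition \eqref{blockflow} is equivalent to
\begin{align*}
\frac{\pd \Phi}{\pd c_{mk}^{lp\hat l\hat p}} = -(\B_{mk}^{lp\hat l\hat p})_- \Phi, \qquad \frac{\pd \hat\Phi}{\pd c_{mk}^{lp\hat l\hat p}} = (\B_{mk}^{lp\hat l\hat p})_+ \hat\Phi.
\end{align*}
Hence it is enough to check commutativity on $\Phi$ and on $\hat\Phi$; commutativity on $\cL$ then follows automatically by dressing.

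Second, I derive the Lax-type evolution of $\B_{mk}^{lp\hat l\hat p}$ itself. From the lemma giving the evolution of $\cM_j$, $\hat\cM_j$, $\Q$, $\hat\Q$ under $D_n$ and $\hat D_n$, together with the Lax equations \eqref{PPht}, each factor entering the product defining $\B_{mk}^{lp\hat l\hat p}$ is transformed by conjugation with the \emph{same} operator, namely $(\cL^{n/(4n)})_+$ under $D_n$ and $-(\cL^{n/2})_-$ under $\hat D_n$. The graded Leibniz rule for the supercommutator then propagates this to
\begin{align*}
D_n \B_{mk}^{lp\hat l\hat p} = [(\cL^{n/(4n)})_+, \B_{mk}^{lp\hat l\hat p}], \qquad \hat D_n \B_{mk}^{lp\hat l\hat p} = [-(\cL^{n/2})_-, \B_{mk}^{lp\hat l\hat p}],
\end{align*}
for $n = 4i-2, 4i-1$. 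With these two ingredients the verification becomes case-by-case. For $[\pd/\pd c_{mk}^{lp\hat l\hat p}, D_n]\Phi$ one substitutes the Sato equation for $\pd\Phi/\pd c_{mk}^{lp\hat l\hat p}$, the Lax equation \eqref{ppt1} for $D_n\Phi$, and the Lax-type rule for $D_n\B_{mk}^{lp\hat l\hat p}$; the result collapses to a sum of three projected supercommutators which cancel by the standard projection identity, exactly as in the S2BKP calculation carried out in the previous section. The three remaining pairs involving $\hat D_n$ or $\hat\Phi$ are treated identically.

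The main obstacle is the bookkeeping of signs. Since $\B_{mk}^{lp\hat l\hat p}$ has super-parity $l + p + \hat l + \hat p \pmod 2$, the convention $[X,Y] = XY - (-1)^{|X||Y|}YX$ produces different sign factors $(-1)^{(l+p+\hat l+\hat p)n}$ according to whether $n$ is $4i-2$ (even) or $4i-1$ (odd), and similarly for $\hat D_n$. All sixteen combinations of $(l,p,\hat l,\hat p) \in \{0,1\}^4$ paired with both parities of the hierarchy time must be tracked, but each case reduces to the same algebraic identity with suitable sign adjustments, so no genuinely new phenomenon arises beyond the S2BKP situation.
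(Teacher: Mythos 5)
Your proposal matches the paper's own argument: the paper likewise passes (implicitly) to the Sato form $\partial_{c_{mk}^{lp\hat l\hat p}}\Phi=-(\B_{mk}^{lp\hat l\hat p})_-\Phi$, invokes the evolution equations \eqref{Mt} and \eqref{PPht} to get $D_n\B_{mk}^{lp\hat l\hat p}=[(\cL^{\frac{n}{4n}})_+,\B_{mk}^{lp\hat l\hat p}]$, and then cancels the three projected supercommutators exactly as you describe, treating the remaining cases "in similar ways." Your explicit attention to the parity-dependent sign factors is a point the paper glosses over, but it does not change the route.
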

\begin{prf} According to the definition,
\begin{eqnarray*}
[\partial_{c_{mk}^{lp\hat l\hat p}},D_n]\Phi=\partial_{c_{mk}^{lp\hat l\hat p}}
(D_n\Phi)-
D_n (\partial_{c_{mk}^{lp\hat l\hat p}}\Phi),
\end{eqnarray*}
and using the actions of the additional flows and the
flows of the D type  Drinfeld-Sokolov hierarchy on $\Phi$,  we have
\begin{eqnarray*}
[\partial_{c_{mk}^{lp\hat l\hat p}},D_n]\Phi
&=& -\partial_{c_{mk}^{lp\hat l\hat p}}\left((\cL^{\frac{k}{4n}})_{-}\Phi\right)+
D_n \left((\B_{mk}^{lp\hat l\hat p})_{-}\Phi \right)\\
&=& -(\partial_{c_{mk}^{lp\hat l\hat p}}\cL^{\frac{k}{4n}} )_{-}\Phi-
(\cL^{\frac{k}{4n}})_{-}(\partial_{c_{mk}^{lp\hat l\hat p}}\Phi)\\&&+
[D_n (\B_{mk}^{lp\hat l\hat p})]_{-}\Phi +
(\B_{mk}^{lp\hat l\hat p})_{-}(D_n\Phi).
\end{eqnarray*}
Using eq.\eqref{PPht} and eq.\eqref{Mt}, it
equals
\begin{eqnarray*}
[\partial_{c_{mk}^{lp\hat l\hat p}},D_n]\Phi
&=&[\left(\B_{mk}^{lp\hat l\hat p}\right)_{-}, \cL^{\frac{k}{4n}}]_{-}\Phi+
(\cL^{\frac{k}{4n}})_{-}\left(\B_{mk}^{lp\hat l\hat p}\right)_{-}\Phi\\
&&+[(\cL^{\frac{k}{4n}})_{+},\B_{mk}^{lp\hat l\hat p}]_{-}\Phi-(\B_{mk}^{lp\hat l\hat p})_{-}(\cL^{\frac{k}{4n}})_{-}\Phi\\
&=&[(\B_{mk}^{lp\hat l\hat p})_{-}, \cL^{\frac{k}{4n}}]_{-}\Phi- [\B_{mk}^{lp\hat l\hat p},
(\cL^{\frac{k}{4n}})_{+}]_{-}\Phi\\&&+
[(\cL^{\frac{k}{4n}})_{-},(\B_{mk}^{lp\hat l\hat p})_{-}]\Phi\\
&=&0.
\end{eqnarray*}
The other cases of this proposition can be proved in similar ways.
\end{prf}

The above proposition indicates that eq.\eqref{blockflow} is a symmetry of the supersymmetric D type Drinfeld-Sokolov hierarchy.
Further we can prove that the following identities hold true
\begin{equation}\label{blockflowM}
\frac{\pd \cM_i}{\pd c_{mk}^{lp\hat l\hat p}}=[-(\B_{mk}^{lp\hat l\hat p})_-,\cM_i], \quad
\frac{\pd\hat{\cM_i}}{\pd c_{mk}^{lp\hat l\hat p}}=[(\B_{mk}^{lp\hat l\hat p})_+,\hat{\cM}_i],
\end{equation}
\begin{equation}\label{waves}
\frac{\pd w(z^{\frac 1{4n}})}{\pd c_{mk}^{lp\hat l\hat p}}=-(\B_{mk}^{lp\hat l\hat p})_-w(z^{\frac 1{4n}}), \quad
\frac{\pd\hat{w}(z^{\frac 12})}{\pd c_{mk}^{lp\hat l\hat p}}=(\B_{mk}^{lp\hat l\hat p})_+\hat{w}(z^{\frac 12}).
\end{equation}

Using same techniques used in \cite{ourBlock}, the following theorem can be derived.
\begin{thm}\label{thm-MLs}
The flows in eq.\eqref{blockflow} about  additional symmetries of supersymmetric D type Drinfeld-Sokolov hierarchy compose a supersymmetric  type Block  Lie algebra which contains the following Block Lie algebra while $l=p=\hat l=\hat p=0$

\[\label{blocks}[\partial_{c_{ml}^{0000}},\partial_{c_{sk}^{0000}}]=(km-s l)\partial_{c_{m+s-1,k+l-1}^{0000}},\ \  m,s, k,l\in \Z_+,\]

which holds in the sense of acting on  $\Phi$, $\hat\Phi$ or $\cL.$
\end{thm}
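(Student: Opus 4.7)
The plan is to specialize the additional operator of \eqref{blockflow} to the sector $l=p=\hat l=\hat p=0$ and then adapt the Bosonic Block-algebra calculation of \cite{ourBlock}. Substituting $l=p=\hat l=\hat p=0$ into the definition of $\B_{mk}^{lp\hat l\hat p}$ collapses the two terms in the bracket to $1-(-1)^{k}$ times the identity, giving
\begin{equation*}
\B_{mk}^{0000} \;=\; \bigl(1-(-1)^{k}\bigr)(\cM_0-\hat{\cM}_0)^{k}\cL^{m},
\end{equation*}
so in this sector the Fermionic Orlov--Schulman operators $M_1,\hat M_1,\Q,\hat\Q$ disappear and only the Bosonic building blocks $N:=\cM_0-\hat{\cM}_0$ and $\cL$ remain. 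Lemma~\ref{thm-Mw} gives the crucial relations $[\cL,\cM_0]=[\cL,\hat{\cM}_0]=1$, hence $[\cL,N]=0$ and in particular $[\B_{ml}^{0000},\cL]=0$; this commutativity is the backbone of the Block structure.

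First I would pass to the Sato-type form $\partial_{c_{ml}^{0000}}\Phi = -(\B_{ml}^{0000})_-\Phi$ of \eqref{blockflow} and expand
\begin{equation*}
[\partial_{c_{ml}^{0000}},\partial_{c_{sk}^{0000}}]\Phi
\;=\;\bigl[\,(\partial_{c_{sk}^{0000}}\B_{ml}^{0000})-(\partial_{c_{ml}^{0000}}\B_{sk}^{0000})\,\bigr]_{-}\Phi
\,+\,\bigl[(\B_{sk}^{0000})_-,(\B_{ml}^{0000})_-\bigr]\Phi.
\end{equation*}
Then I would compute $\partial_{c_{ml}^{0000}}\B_{sk}^{0000}$ by Leibniz on $N^{k}\cL^{s}$, using the transformation formulas $\partial_{c_{ml}^{0000}}\cM_0 = [-(\B_{ml}^{0000})_-,\cM_0]$ and $\partial_{c_{ml}^{0000}}\hat{\cM}_0 = [(\B_{ml}^{0000})_+,\hat{\cM}_0]$ from \eqref{blockflowM}, together with $\partial_{c_{ml}^{0000}}\cL = [(\B_{ml}^{0000})_+,\cL]$ (which follows from $[\B_{ml}^{0000},\cL]=0$). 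Writing $(\B_{ml}^{0000})_+ = \B_{ml}^{0000}-(\B_{ml}^{0000})_-$ and rearranging, the action on $N$ becomes
\begin{equation*}
\partial_{c_{ml}^{0000}} N \;=\; -[(\B_{ml}^{0000})_-,N]\;-\;[\B_{ml}^{0000},\hat{\cM}_0].
\end{equation*}
The second bracket is precisely the source of the Block coefficient, because $[\cL,\hat{\cM}_0]=1$ yields $[\cL^{m},\hat{\cM}_0]=m\cL^{m-1}$.

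After substitution and distribution, the adjoint-type contributions $[-(\B_{ml}^{0000})_-,\,\cdot\,]$ combine with the projection commutator $[(\B_{sk}^{0000})_-,(\B_{ml}^{0000})_-]$ through the standard $\pm$-projection identities (of the type $[A_+,B]_- = [A_+,B_-]_-$ for pseudo-differential operators) and cancel. The remaining ``inner'' contributions, originating from the two $[\B_{\bullet},\hat{\cM}_0]$-type terms and their counterparts under $(m,l)\!\leftrightarrow\!(s,k)$, produce after antisymmetrization a term proportional to $(km-sl)\,N^{k+l-1}\cL^{m+s-1}$. The overall numerical factor $(1-(-1)^{k})(1-(-1)^{l})$ matches $2\bigl(1-(-1)^{k+l-1}\bigr)$ exactly when $k,l$ are both odd (the only sector in which either side is nonzero, and then $k+l-1$ is also odd as required), so the result aligns with $(km-sl)\partial_{c_{m+s-1,k+l-1}^{0000}}\Phi$, giving \eqref{blocks}. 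When $k$ or $l$ is even the corresponding generator vanishes and the identity is trivial.

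\textbf{Main obstacle.} The principal difficulty is bookkeeping: one has to verify that the ``off-diagonal'' contributions produced by the mixed bracket $[N,\hat{\cM}_0]$ that appear inside the Leibniz expansion of $\partial_{c_{ml}^{0000}}(N^{k})$ --- which are neither purely adjoint nor of simple inner type --- cancel between the two flows after antisymmetrization. This cancellation ultimately rests on the symmetry of $N$ under $\cM_0\!\leftrightarrow\!\hat{\cM}_0$ together with $[\cL,N]=0$. Since the restriction to $l=p=\hat l=\hat p=0$ leaves every operator Bosonic, no super-signs intervene and the calculation closely parallels the one in \cite{ourBlock}; the full $N=2$ super Block extension for arbitrary super-indices would require the same manipulations carried out with super-commutators, which is exactly what the general form of $\B_{mk}^{lp\hat l\hat p}$ in \eqref{defBoperator}-analogue is engineered to accommodate.
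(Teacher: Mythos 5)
The paper does not actually write out a proof of Theorem \ref{thm-MLs}; it delegates entirely to the Bosonic computation in \cite{BKP-DS}. Your strategy --- restrict to $l=p=\hat l=\hat p=0$ so that $\B_{mk}^{0000}=(1-(-1)^k)N^k\cL^m$ with $N=\cM_0-\hat\cM_0$, pass to the Sato form, and locate the Block structure constant in the non-adjoint part of $\partial_{c_{ml}^{0000}}N=-[(\B_{ml}^{0000})_-,N]-[\B_{ml}^{0000},\hat\cM_0]$ via $[\cL^m,\hat\cM_0]=m\cL^{m-1}$ --- is exactly the strategy of that reference, so you are reconstructing the intended argument rather than replacing it. The genuine gap is the step you yourself flag as the ``main obstacle'' and then dispose of by appeal to ``the symmetry of $N$ under $\cM_0\leftrightarrow\hat\cM_0$'': the Leibniz expansion of $[\B_{ml}^{0000},\hat\cM_0]$ produces, besides the wanted term $m N^l\cL^{m-1}$ (times $N^{k-1}\cL^s$ factors), cross terms built from $[N,\hat\cM_0]=[\cM_0,\hat\cM_0]$, and these do \emph{not} cancel by any symmetry of $N$ alone. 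What is actually required is that $[\cM_0,\hat\cM_0]$ is central for the subalgebra generated by $N$ and $\cL$ (commutation with $\cL$ follows from Jacobi and $[\cL,\cM_0]=[\cL,\hat\cM_0]=1$, but commutation with $N$ needs a separate verification); only then do the cross terms acquire the symmetric coefficient $kl\,[\cM_0,\hat\cM_0]N^{k+l-2}\cL^{m+s}$ and cancel under the antisymmetrization $(m,l)\leftrightarrow(s,k)$. Without establishing this, the coefficient $(km-sl)$ is not isolated and the computation does not close.

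A second concrete problem is your treatment of the even indices. From your own normalization, $\B_{ml}^{0000}$ vanishes when $l$ is even, so the claim ``when $k$ or $l$ is even the corresponding generator vanishes and the identity is trivial'' fails precisely when $k$ and $l$ are \emph{both} even: then the left-hand side of \eqref{blocks} is zero, while $k+l-1$ is odd, so $\partial_{c_{m+s-1,k+l-1}^{0000}}$ is generated by the nonvanishing operator $2N^{k+l-1}\cL^{m+s-1}$ and $km-sl$ is generically nonzero. So either the theorem's index range must be restricted to odd second indices --- which is what the B-type condition $(N^k\cL^m)^*=(-1)^kDN^k\cL^mD^{-1}$ forces in any case --- or the normalization of the right-hand side must be adjusted; your matching of the numerical factors $(1-(-1)^k)(1-(-1)^l)$ against $2(1-(-1)^{k+l-1})$ is valid only in the both-odd sector, and you should state explicitly that this is the only sector in which \eqref{blocks} holds as written.
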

\begin{prf}
The similar proof for eq.\eqref{blocks} can be found in our paper \cite{BKP-DS}.
\end{prf}

This is one kind of supersymmetric extensions of the Block algebra because it is involved with  the supersymmetric variables and supersymmetric derivative $D$. However, its algebra structure is still not clear now, which  deserves further study in the future.

  \section{Conclusions and Discussions}

Our earlier papers  show that the Block type algebras
appear not only in Toda type difference systems
  but also in  differential systems such as two-BKP hierarchy, D type
  Drinfeld-Sokolov hierarchy \cite{BKP-DS}. The above results show that in their corresponding supersymmetric systems, there also exists the hidden $N=2$ Block type supersymmetric
  algebraic structures. These results further show that the
  Block type infinite dimensional Lie algebra has a certain of
  universality in integrable hierarchies.

  Although the supersymmetric two-component BKP hierarchy and supersymmetric Drinfeld-Sokolov hierarchy of type D might  be not available in the fermionic string theory now comparing with the application of the classical KP hierarchy to the bosonic string, they have a great advantage of showing their superconformal structures. In this paper, we also show the structure of a super Block algebra of the supersymmetric two-BKP hierarchy and its reduced hierarchy.
  The superconformal algebra may appear  in the related Hamiltonian structures by considering the reductions of super-BKP hierarchies like reductions of the super-KP system to super-KdV system. This may be an interesting subject for our future study which may relate the supersymmetric BKP systems in this paper to problems in physics. There are also some other interesting subjects such as the relation of the hierarchies introduced in this paper with the  quantum spin chains  as in \cite{Spinchains}.
These directions might be included in our future study.

\vskip 0.5truecm \noindent{\bf Acknowledgments.}
Chuanzhong Li is supported by the National Natural Science Foundation of China under Grant No. 11201251,
 the Zhejiang Provincial Natural Science Foundation under Grant No. LY15A010004, LY12A01007, the Natural Science Foundation of Ningbo under Grant No. 2013A610105, 2014A610029. Jingsong He is supported by the National Natural Science Foundation of China under Grant No. 11271210, K. C. Wong Magna Fund in
Ningbo University.


\begin{thebibliography}{99}


 \bibitem{os1}A. Yu. Orlov, E. I. Schulman, Additional symmetries of integrable equations and
conformal algebra reprensentaion, Lett. Math. Phys. 12(1986),
171-179.
\bibitem{D witten}
R. Dijkgraaf, E. Witten, Mean field theory, topological field
theory, and multimatrix models, Nucl. Phys. B 342(1990), 486-522.
\bibitem{Douglas} M. Douglas, Strings in less than one dimension and the generalized
KdV hierarchies, Phys. Lett. B 238(1990),176-180.
\bibitem{dl1}L. A. Dickey, Lectures on classical w-algebras, Acta Applicandae
Mathematicae 47(1997), 243-321.
\bibitem{asv2}M. Adler, T. Shiota, P. van Moerbeke, A Lax representation for the Vertex operator
and the central extension, Comm. Math. Phys. 171(1995), 547-588.
\bibitem{DKJM-KPBKP}
E. Date, M. Kashiwara, M. Jimbo, T. Miwa, Transformation groups for
soliton equations. Nonlinear integrable systems--classical theory
and quantum theory (Kyoto, 1981), 39-119, World Sci. Publishing,
Singapore, 1983.
\bibitem{DJKM}E. Date, M. Kashiwara, M. Jimbo, T. Miwa,
Transformation groups for soliton equations. IV. A new hierarchy of
soliton equations of KP-type, Phys. D 4(1982), 343-365.

\bibitem{takasaki}K. Takasaki, Quasi-classical limit of BKP hierarchy and W-infnity symmetris. Lett.
Math. Phys. 28(1993), 177-185.
\bibitem{tu}M. H. Tu,  On the BKP hierarchy: Additional symmetries, Fay identity and Adler-
Shiota-van Moerbeke formula, Lett. Math. Phys.  81(2007), 91-105.


\bibitem{heLMP}J. S. He, K. L. Tian, A. Foerster and W. X. Ma, Additional Symmetries and
String Equation
of the CKP Hierarchy, Lett. Math. Phys. 81(2007), 119-134.

\bibitem{jipeng}J. P. Cheng, J. S. He, S Hu,
The ``Ghost" Symmetry of the BKP hierarchy,  J. Math. Phys. 51(2010), 053514.

\bibitem{yamada}K. Ueno, H. Yamada, A supersymmetric extension of nonlinear integrable systems, Symposium on Topological and Geometrical Methods in Field Theory, Espoo, Finland, World Scientific (1986) 59-72.
 \bibitem{kacBF}V. G. Kac and J. W. van de Leur, Super Boson-Fermion Correspondence, Ann Inst Fourier 37(1987), 99-137.
 \bibitem{kacBFB} V. G. Kac and J. W. van de Leur, Super Boson-Fermion Correspondence of type B
Infinite-Dimensional Lie Algebras and Groups (Luminy-Marseille, 1988), Adv. Ser. Math. Phys 7, 369-406.
A. S. Schwarz: ``Fermionic string and universal moduli space", Nucl. Phys.
B 317(1989), 323-343.
\bibitem{maninsuperKP}Yu. Manin and A. Radul, A supersymmetric extension of the Kadomtsev-Petviashvili hierarchy, Commun. Math. Phys. 98(1985), 65-77.
\bibitem{Mulase}M. Mulase A new super KP system and a characterization of the Jacobians of arbitrary algebraic super curves, J. Diff. Geom., 34 (1991), 651.
\bibitem{rosly}
A. A. Rosly, A. S. Schwarz and A. A. Voronov: Geometry of superconformal manifolds, Commun. Math. Phys. 119 (1988), 129-152.
\bibitem{orlovsuper}A. Yu Orlov, Vertex operator, ${\bar \partial}$-problem, variational identities and Hamiltonian formalism for 2+1 integrable systems in: Nonlinear and Turbulent Processes in Physics, Proc Intern Workshop on Plazma Theory and Nonlinear and Turbulent Processes in Physics, Kiev 1987, ed V.Bar'yakhtar,V.Chernousenko, N.Erokhin, A.Sitenko and V.E.Zakharov, Singapore, World Scientific,
(1988), 116-134.

\bibitem{Stanciu}S. Stanciu, Additional Symmetries
of Supersymmetric KP Hierarchies, Commun. Math. Phys. 165(1994), 261-279.

\bibitem{aratynsuper}H. Aratyn, E. Nissimov, S. Pacheva, Supersymmetric Kadomtsev-Petviashvili hierarchy:
``Ghost" symmetry structure, reductions, and Darboux-Ba\"cklund solutions, J. Math. Phys. 40(1999), 2922-2932.


\bibitem{superKPhe}J. S. He, Y. Cheng, The KW theorem for the Supersymmetric KP hierarchy,
Phys. Lett. B, 449(1999), 194-200.
\bibitem{StanciuBKP}E. Ramos, S. Stanciu, On the supersymmetric BKP-hierarchy, Nucl.Phys. B 427(1994), 338-350.



\bibitem{shiota}T. Shiota, Prym Varieties and Soliton Equations, in Infinite Dimensional Lie
 algebras and Groups (Luminy-Marseille, 1988), 407-448,
 Adv. Ser. Math. Phys. 7, World Sci. Publ., Teaneck, NJ, 1989.


\bibitem{LWZ} S. Q. Liu,  C. Z. Wu, Y. Zhang,  On the Drinfeld-Sokolov hierarchies of
$D$ type, Intern. Math. Res. Notices 2011(2011),1952-1996.

\bibitem{bkpds} C. Z. Wu, From Additional Symmetries to Linearization of
Virasoro Symmetries, Physica D 249(2013), 25-37.

\bibitem{BKP-DS}C. Z. Li, J. S. He, Block algebra in two-component BKP and D type Drinfeld-Sokolov hierarchies, J. Math. Phys. 54(2013), 113501.

\bibitem{ourBlock}
 C. Z. Li, J. S. He, Y. C. Su, Block type symmetry of bigraded Toda hierarchy,
J. Math. Phys. 53(2012), 013517.
\bibitem{dispersionless}C. Z. Li, J. S. He, Y. C. Su, Block (or Hamiltonian) Lie symmetry of dispersionless D type Drinfeld-Sokolov hierarchy,  Commun. Theor. Phys. 61(2014), 431-435.
\bibitem{torus}C. Z. Li, J. S. He, Quantum Torus symmetry of the KP, KdV and BKP hierarchies, Lett. Math. Phys. 104(2014), 1407-1423.





\bibitem{kulishkdv}P. P. Kulish, An analogue of the Korteweg-de Vries quation for the superconformal algebra,  Differential geometry, Lie groups and mechanics. Part VIII , Zap. Nauchn. Sem. LOMI,  155 , ``Nauka", Leningrad. Otdel., Leningrad, 1986, 142-149.

\bibitem{string}A. P. Isaev, Fermionic string model in spaces of Lie groups,  Theoret. and Math. Phys.  71 (1987), 616-626.



\bibitem{KDV}P. P. Kulish, A. Zeitlin, Superconformal field theory and SUSY N=1 KdV hierarchy II: the Q-operator,
Nuclear Phys B 709(2005), 578-591.

\bibitem{toda-kdv}P. P.  Kulish , A. Zeitlin  Quantum supersymmetric Toda-mKdV hierarchies,
 Nuclear Physics B,  720 (2005), 289-306.

 \bibitem{DS}V. G. Drinfeld,  V. V. Sokolov,
 Lie algebras and equations of Korteweg-de Vries type,
Journal of Mathematical Sciences 30(1985), 1975-203.

 \bibitem{kannosuperDS}T. Inami and H. Kanno, Lie Superalgebraic Approach to Super Toda Lattice
and Generalized Super KdV Equations,Commun. Math. Phys. 136(1991), 519-542.
 \bibitem{Block}
  R. Block, On torsion-free abelian groups and Lie algebras,
  Proc. Amer. Math. Soc. 9(1958), 613-620.

\bibitem{Spinchains}
Z Tsuboi, A Zabrodin, A Zotov
    Supersymmetric quantum spin chains and classical integrable systems, arXiv:1412.2586.
\end{thebibliography}
\end{document}